\newcommand\fs@booktabsruled{%
  \def\@fs@cfont{\bfseries\strut}\let\@fs@capt\floatc@ruled
  \def\@fs@pre{\hrule height\heavyrulewidth depth0pt \kern\belowrulesep}%
  \def\@fs@mid{\kern\aboverulesep\hrule height\lightrulewidth\kern\belowrulesep}%
  \def\@fs@post{\kern\aboverulesep\hrule height\heavyrulewidth\relax}%
  \let\@fs@iftopcapt\iftrue
}
\definecolor{emerald}{rgb}{0.07, 0.53, 0.03}
\newcommand{\avg}[1]{\ensuremath{\left\langle#1\right\rangle}}
\newcommand{\bra}[1]{\ensuremath{\left\langle#1\right|}}
\newcommand{\ket}[1]{\ensuremath{\left|#1\right\rangle}}
\newcommand{\ketbra}[2]{\ket{#1}\!\!\bra{#2}}
\newcommand{\mbf}[1]{\mathbf{#1}}
\newcommand{\mbx}{\mathbf{x}}
\newcommand{\mby}{\mathbf{y}}
\newcommand{\mge}{\xi}
\newcommand{\bs}[1]{\boldsymbol{#1}}
\newcommand{\bp}{\mathrm{P}}
\newcommand{\bfp}{\mathbf{P}}
\newcommand{\mm}[1]{\gamma_{#1}}
\newcommand{\mc}[1]{\mathcal{#1}}
\newcommand{\mcu}{\mathcal{U}}
\newcommand{\mcv}{\mathcal{V}}
\newcommand{\mce}{\mathcal{E}}
\newcommand{\mcs}{\mathcal{S}}
\renewcommand{\ss}[1]{{\sf{#1}}}
\newcommand{\ds}[1]{\mathds{#1}}
\newcommand{\bsalph}{\boldsymbol{\alpha}}
\newcommand{\bsbet}{\boldsymbol{\beta}}
\newcommand{\bset}{\boldsymbol{\eta}}
\newtheorem*{rep@theorem}{\rep@title}
\newcommand{\newreptheorem}[2]{%
\newenvironment{rep#1}[1]{%
 \def\rep@title{#2 \ref{##1}}%
 \begin{rep@theorem}}
 {\end{rep@theorem}}}
\newtheorem{theorem}{Theorem}
\newtheorem{lemma}[theorem]{Lemma}
\newtheorem{definition}[theorem]{Definition}
\newtheorem{innercustomresult}[theorem]{Result}
\begin{document}

\title{Fermionic Averaged Circuit Eigenvalue Sampling}
\author{Adrian Chapman}
\affiliation{Phasecraft Inc., Washington DC, USA}
\author{Steven T. Flammia}
\affiliation{Phasecraft Inc., Washington DC, USA}
\affiliation{Department of Computer Science, Virginia Tech, Alexandria, VA, USA}

\begin{abstract}
    Fermionic averaged circuit eigenvalue sampling (FACES) is a protocol to simultaneously learn the averaged error rates of many fermionic linear optical (FLO) gates simultaneously and self-consistently from a suitable collection of FLO circuits. 
    It is highly flexible, allowing for the in situ characterization of FLO-averaged gate-dependent noise under natural assumptions on a family of continuously parameterized one- and two-qubit gates. 
    We rigorously show that our protocol has an efficient sampling complexity, owing in-part to useful properties of the Kravchuk transformations that feature in our analysis. 
    We support our conclusions with numerical results. 
    As FLO circuits become universal with access to certain resource states, we expect our results to inform noise characterization and error mitigation techniques on universal quantum computing architectures which naturally admit a fermionic description.
\end{abstract}

\maketitle

\section{Introduction}

As fundamental building blocks of our universe, fermions are naturally the mediators of quantum information in many quantum computing architectures, such as superconducting \cite{arute2019quantum} and semiconductor devices \cite{burkard2023semiconductor}.
Many of the systems we hope to simulate with a quantum computer are systems of interacting fermions.
These include quantum materials and quantum chemical systems, where the relevant degrees of freedom are interacting electrons \cite{mcardle2020quantum}. 
It is thus natural to consider models of quantum computation that regard fermions as central objects.
In principle, this removes any overhead from switching between fermion- and spin-based encodings, but it also presents unique challenges for other quantum information processing tasks due to the nonlocal nature of fermionic antisymmetric exchange statistics.

Predominant among these tasks is the characterization of quantum noise. 
Whereas noise is usually assumed to act independently on local degrees of freedom, subsequent gates will cause the effect of any such noise process to spread.
In particular, fermionic models of quantum computing generally involve families of continuously parameterized gates. 
In this setting, even local noise will become complex---and thus difficult to learn---on the circuit timescales necessarily to realize quantum computational advantage.
To address this issue, we can take inspiration from solutions in the qubit setting. 
In particular, we consider the strategy of \emph{noise tailoring}, whereby noise is first physically driven to a desired form via averaging, or \emph{twirling}, and the averaged noise parameters can be more easily learned ~\cite{knill2005quantum, wallman2016noise}.

Noise tailoring is well-suited to the setting of gate-dependent noise, where we assume each distinct physical operation on the experiment introduces a unique, reproducible noise channel. 
The general problem of estimating all noise in a gate set having gate-dependent noise \textit{without} any averaging is quite challenging~\cite{blumekohout2013robust, kim2015microwave, nielsen2021gate}, so it is natural to ask what role noise tailoring can play to simplify the problem. 
Ref.~\cite{flammia2022averaged} introduced averaged circuit eigenvalue sampling (ACES) for simultaneously characterizing the gate-dependent noise of many noisy Clifford gates with Pauli twirling. 
Clifford circuits with Pauli noise preserve the set of Pauli observables up to an overall factor---called the circuit eigenvalue---which quantifies the noise in the circuit.
The main idea behind ACES is that the eigenvalues of many different noisy circuits are all dependent on a common set of gate-level noise eigenvalues, which can be extracted with access to enough experimental data.

It is well-understood that Clifford circuits with stabilizer measurements are not universal for quantum computation.
Any attempt to extend the ACES protocol to a universal circuit family will thus require a generalization to circuits and noise models that do not preserve unnormalized Pauli operators.
In this work, we provide a protocol for the characterization of noisy \textit{fermionic linear optical} (FLO) circuits, or equivalently, their qubit-embedded versions, noisy nearest-neighbor \textit{matchgate} circuits.
We call this FACES, for \textit{fermionic averaged circuit eigenvalue sampling}. 
FLO circuits describe general time-dependent free-fermion dynamics, and matchgate circuits describe these same dynamics on qubits under a Jordan-Wigner transformation~\cite{jozsa2008matchgates}. 
We work almost exclusively in the fermion picture here, but our results are interchangable across the two paradigms. 

Though FLO and matchgate circuits are comprised of non-Clifford gates, they are nonetheless efficiently simulable classically. 
However, they become universal for quantum computation when we additionally include the application of arbitrary Clifford gates; for example, even adding nearest-neighbor SWAP operations between qubits is sufficient.
Similarly, Clifford circuits become universal for quantum computation with the inclusion of arbitrary matchgates.
Since characterizing gate-dependent noise in Clifford circuits is a well-studied problem, we expect that FACES can be used in tandem with other methods to estimate averaged noise in universal gate sets. 
Furthermore, we expect that FACES will find natural applications in the setting of quantum materials and chemistry simulations, for which fermionic degrees of freedom are fundamental.

Our main results include an algorithm for extracting the FLO-averaged eigenvalues of the elements of an elementary set of FLO unitary gates.
Additionally, we rigorously prove that our protocol has an efficient sample complexity when reasonable assumptions about our setting are met.
Finally, we demonstrate the efficacy of our protocol by numerically showing that it is feasible for realistic error rates.

The input to our protocol is a model, which we define as follows
\begin{definition}[Model]
    A FACES model on $2n$ fermionic modes is specified by a 3-tuple
    \begin{align}
        \mc{M} = (\{G_1,\ldots,G_K\}, \{\boldsymbol{\xi}_{1}, \dots, \boldsymbol{\xi}_{K}\}, \{C_1,\ldots,C_J\})
    \end{align}
    for $K$ gates $G_k$, each of which is associated to a vector of $2n + 1$ channel eigenvalues $\bs{\xi}_k$, and $J \geq K$ FLO circuits $C_j$. 
    From these, we compute the design matrix $\mbf{A}$, whose integer-valued element $A_{jk}$ specifies the number of occurrences of gate $G_k$ in circuit $C_j$.
    A FACES experiment constructs this matrix and samples each circuit $S$ times each.
    We say that a model is realizable by our experiment if the noisy gates are described by a set of true values for $\bs{\xi} \coloneqq \cup_k \ \bs{\xi}_k$.
    \label{def:model}
\end{definition}

Our main result applies to the realizable setting (i.e.\ we don't cover the \emph{agnostic setting}, where we allow for the possibility that the true gate noise is not represented by the model).
It is stated informally as follows.

\begin{innercustomresult}[Informal Statement of \cref{thm:samplecomplexity}]
    Consider the FACES model $\mc{M} = (\mc{G}, \bs{\xi}, \mc{C})$.
    A FACES experiment on $\mc{M}$ returns an estimate $\hat{\boldsymbol{\xi}}$ for the true gate eigenvalues $\boldsymbol{\xi}$ satisfying
    \begin{align}
    \lVert \hat{\bs{\xi}} - \bs{\xi} \rVert_{\infty} \leq \lVert \mbf{A}^+ \rVert_{\infty}\,\varepsilon,
    \label{}
    \end{align}
    where $\mbf{A}^+$ is the pseudoinverse of the design matrix, with $S = O\bigl(\frac{n \log(J)}{\varepsilon^2}\bigr)$ samples per circuit.
\end{innercustomresult}

The norm on $\mbf{A}^+$ is the induced $\infty\to\infty$ norm, defined as $\lVert X \rVert_{\infty} = \mathrm{max}_i \sum_{j} |X_{ij}|$.
In the setting where $\mbf{A}$ does not have full column-rank, its psuedoinverse is still well-defined, but the model includes at least one unlearnable, or ``gauge'' parameter that is inaccessible to our framework (for a detailed investigation of this, see \cite{chen2023learnability,chen2024efficient}).

Considering eq. (2), one imagines minimizing $\rVert \mbf{A}^+ \lVert_{\infty}$ as a promising method to likely improve the precision of the estimate.
Previous work has considered optimizing gate-dependent noise learning in a restricted setting (this is done in gate-set tomography by amplifying certain errors using long sequences \cite{nielsen2021gate}, for example).
Other methods have also been explored in an ACES-type context.
For example, the authors of ref.~\cite{hockings2025scalable} consider optimizing a figure of merit related to the gate-eigenvalue estimator covariance matrix.
A comprehensive theory is lacking and currently constitutes an open research direction.

The characterization of noisy fermionic linear optical gates is well-studied in the literature \cite{claes2021character, liu2023group, wilkens2024benchmarking, cudby2024learning}.
Of particular relevance to our works are the results of refs.~\cite{helsen2022matchgate} and \cite{burkat2024lightweight}. 
In ref.~\cite{helsen2022matchgate}, the authors propose a method for estimating the average fidelity of noisy matchgate circuits.
In \cite{burkat2024lightweight}, the authors introduce a corresponding protocol that does not require an average over the matchgate group.
Our result extends the former result to the in situ case under similar physical requirements, though it still requires a group average.
Specifically, we require the ability to randomly sample unitaries to perform the FLO twirling and that the effect of the noise on the twirling gates can itself be incorporated into our framework.
We return to this point throughout the paper.

We also require that we are able to reliably prepare the all-zeros state $\ket{\mbf{0}}$ and all-plus state $\ket{+}^{\otimes n}$, and that we generate a set of noisy FLO circuits that act as a particular target unitary on-net (either the identity, or a particular unitary $U_+$, to be defined). 
We furthermore need to be able to measure in the computational basis and the Pauli $Y$-basis on at least one qubit. 

\begin{figure}
    \centering
    \includegraphics[width=\textwidth]{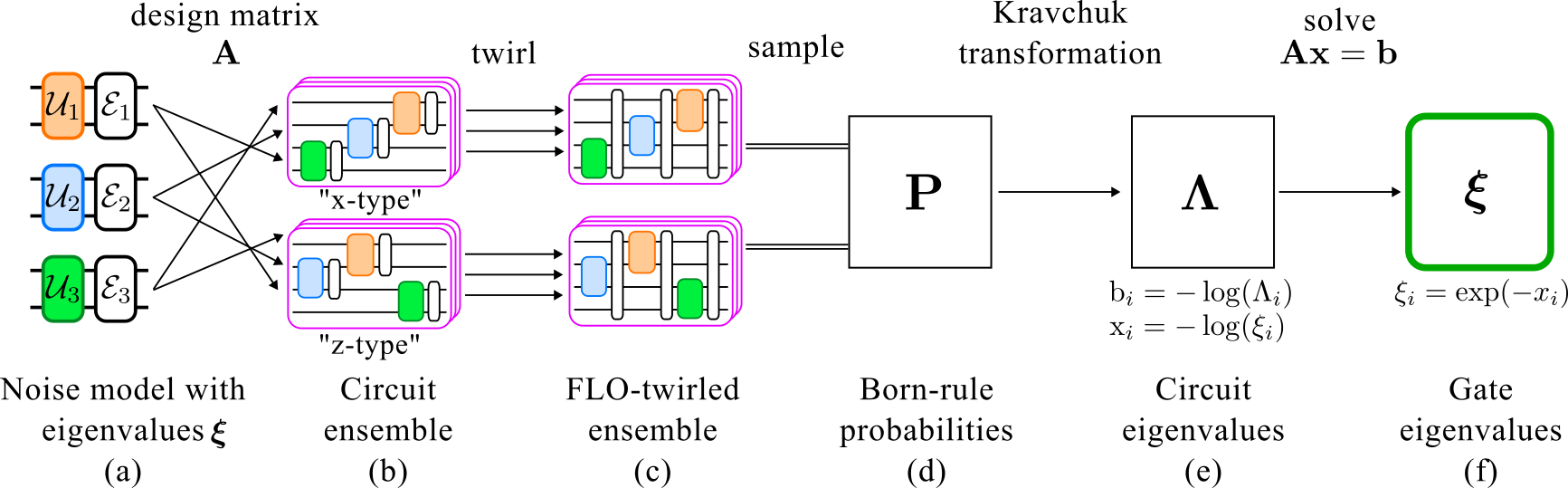}
    \caption{A graphical outline of the FACES protocol. (a) Given a set of noisy gates $\{\mcu_i \mce_i\}_i$ we wish to characterize, we model each such gate as the associated noise channel $\mce_i$, followed by an application of the ideal gate $\mcu_i$. (b) We compose the gates into a collection of circuits according to the design matrix $\mbf{A}$. (c) For each circuit $\mathcal{C}$ in the collection, we randomly sample from an ensemble of circuits whose average is $\mathcal{C}$ subject to matchgate-twirled noise. (d) We sample from the circuits in our collection to obtain an estimate for the Born-rule probabilities of the associated outcomes. (e) We apply a Kravchuk transformation to obtain the eigenvalues of each circuit. (f) We fit the gate eigenvalues to the data using a log-linear model.}
    \label{fig:controlflow}
\end{figure}

Our paper is organized as follows: in \cref{sec:background}, we detail the relevant background on fermionic linear optical circuits (\cref{sec:matchgates}), Kravchuk matrices (\cref{sec:kravchuk}), twirling (\cref{sec:mgn}), and Pauli channel eigenvalues (\cref{sec:channeleig}).
In \cref{sec:protocol}, we explain our measurement protocol, including our assumptions (\cref{sec:modelassumptions} and \cref{sec:floassumptions}), stages of the protocol (\cref{sec:protocoldetails}) and sample complexity (\cref{sec:complexity}).
In particular, we include a proof of our sample complexity guarantee in the main text.
Finally, we provide a numerical demonstration of our protocol in \cref{sec:numerics} and conclude in \cref{sec:conclusion}.
Our protocol in pseudocode format is detailed in \cref{sec:pseudocode}, and remaining proofs are given  in \cref{sec:proofs}.

\section{Background and Notation}
\label{sec:background}

\subsection{Fermionic Linear Optics} 
\label{sec:matchgates}
Fermionic linear optical (FLO) circuits constitute the most general group of unitary circuits that describe free-fermion dynamics.
This group is isomorphic to the group of matchgate circuits on many-body qubit systems by the Jordan-Wigner transformation.
For our purposes, we mainly keep the discussion phrased in terms of fermionic modes, with the understanding that all of our results admit corresponding descriptions in terms of qubits.

We consider FLO circuits acting on even-many fermion modes.
For such a system of $2n$  modes, define the set of Majorana operators satisfying the canonical anticommutation relations
\begin{align}
    \{\mm{\mu}, \mm{\nu}\} = 2\delta_{\mu \nu} I \mathrm{,} \mbox{\hspace{5mm}} \mathrm{and} \mbox{\hspace{5mm}} \mm{\mu}^{\dagger} = \mm{\mu}^{\vphantom{\dagger}}
    \label{eq:car}
\end{align}
for all $\mu$, $\nu \in [2n]$, where the notation $[n]$ means the set of integers $\{1,\ldots,n\}$.
Denote the associated group of FLO circuits as $\ds{F}(2n)$.
The elements of this group are Gaussian unitaries up to reflection $U \coloneqq \gamma_{1}^{x} \exp(\sum_{\mu, \nu} h_{\mu \nu} \gamma_{\mu} \gamma_{\nu})$ for $x \in \{0, 1\}$ and $\mbf{h}$ a real antisymmetric matrix. 
These preserve linear combinations of Majorana modes under conjugation.
Writing $\mc{U}(A) \equiv U A U^{\dagger}$ as the unitary conjugation channel by an element $U \in \ds{F}(2n)$, we have 
\begin{align}
    \mc{U}(\mm{\mu}) = \sum_{\nu = 1}^{2n} R_{\mu \nu}\mm{\nu}
    \label{eq:singleparticle}
\end{align}
for $\mbf{R} = \mbf{R}^x_1 e^{-4 \mbf{h}} \in O(2n)$ the single-particle transition matrix and $\mbf{R}_1$ the canonical reflection with elements $(\mbf{R}_1)_{\mu \nu} = -(-1)^{\delta_{\mu, 1}} \delta_{\mu \nu}$. 
By the canonical anticommutation relations, \cref{eq:car}, FLO circuits also preserve fixed-degree polynomials of Majorana operators.
For a subset $\bsalph \subseteq [2n]$, let $\mm{\bsalph} \equiv \prod_{\mu \in \bsalph} \mm{\mu}$ denote the \emph{Majorana monomial} labeled by $\bsalph$, where the order of the product is taken such that the index is ascending from left to right.
Eq.~(\ref{eq:singleparticle}) generalizes as
\begin{align}
    \mc{U}(\gamma_{\bsalph}) = \sum_{\{\bsbet \subseteq [2n] | |\bsbet| = |\bsalph| \}} \mathrm{det}(\mbf{R}_{\bsalph \bsbet}) \gamma_{\bsbet}
    \label{eq:configtransform}
\end{align}
where $\mbf{R}_{\bsalph \bsbet}$ denotes the submatrix of $\mbf{R}$ with rows taken from $\bsalph$ and columns taken from $\bsbet$ (that is, Majorana monomials transform by a \emph{compound} of $\mbf{R}$).
We adopt the notational shorthand
\begin{align}
    \bsalph | k, \ell \rightarrow \{\bsalph \subseteq [\ell] | |\bsalph| = k \}
    \label{eq:notatshort} \,.
\end{align} 
Linear transformations of the form in \cref{eq:configtransform} constitute a representation of $O(2n)$ by the Cauchy-Binet formula, and this representation is irreducible.
As explained in ref.~\cite{helsen2022matchgate}, FLO transformations are uniquely determined up to an overall phase by their corresponding element of $O(2n)$.

It is convenient to generalize our index-subset notation in \cref{eq:notatshort} to binary strings with a fixed Hamming weight. 
We refer to a binary string by a bold alphabetic character, and, using the same notation as in \cref{eq:notatshort}, we denote $\mbf{x}|k, \ell \rightarrow \{\mbf{x} \subseteq \mathds{Z}_2^{\ell}| |\mbf{x}| = k\}$.
In both use cases, we drop explicit reference to $\ell$ where clear from context and simply write ``$\bsalph | k$'' or ``$\mbf{x}|k$''.

We also need to consider these objects in terms of Pauli operators.
Denote by $\ds{P}(n)$ the $n$-qubit Pauli group, and let $\ss{P}(n)$ denote this group modulo an overall phase. 
We can label an $n$-qubit Pauli operator (with a convenient choice of phase) by a $2n$-bit string $\mbx = (\mbf{a}, \mbf{b})$ as $\sigma_{\mbx} = i^{\mbf{a}\cdot \mbf{b}} \otimes_{j = 1}^{n} X_j^{a_j} Z_j^{b_j}$, denoting single-qubit Pauli operators on a fixed site by $\{X_j, Y_j, Z_j\}$. 
Since we are exclusively concerned with Paulis modulo phase, multiplication in $\ds{P}(n)$ maps conveniently to mod-2 addition of the associated binary strings in $\ss{P}(n)$. 

As stated earlier, Majorana monomials and Pauli operators are in one-to-one correspondence (up to a phase) by the Jordan-Wigner transformation
\begin{align}
\begin{cases}
    \mm{2j - 1} = Z^{\otimes (j - 1)} \otimes X_j & \\
    \mm{2j} = Z^{\otimes (j - 1)} \otimes Y_j &
    \end{cases} \mathrm{.}
    \label{eq:jwtransformation}
\end{align}
These operators satisfy the canonical anticommutation relations \cref{eq:car}.
In this context, the group of FLO circuits is isomorphic to the group of matchgate circuits, which we denote as $\ds{M}(n)$.
Matchgate circuits can be decomposed into 2-qubit matchgates of the form
\begin{align}
    G_j(A, B) \coloneqq \begin{pmatrix}
    a_{00} & 0 & 0 & a_{01} \\
    0 & b_{00} & b_{01} & 0 \\
    0 & b_{10} & b_{11} & 0 \\
    a_{10} & 0 & 0 & a_{11}
    \end{pmatrix}
\end{align}
on nearest-neighbor qubits $\{j, j + 1\}$ on a 1-d line.
Namely, $A$ acts on the even-parity 2-qubit subspace of these qubits, and $B$ on the odd-parity subspace.
We require $\mathrm{det}(A) = \mathrm{det}(B)$ for $G_j(A, B)$ to be a matchgate.
Note that, in particular, single-qubit rotations of the form $e^{i \theta Z_j}$ are matchgates for all $\theta \in [0, 2\pi)$ and all $j \in [n]$. 
Where clear from context, we refer to the groups $\ds{F}(2n)$ and $\ds{M}(n)$ interchangeably.
For example, by $\mc{U}(\sigma_{\mbf{x}})$ for $\mc{U} \in \ds{F}(2n)$, we mean conjugation by the matchgate circuit associated to $\mc{U}$ under the Jordan-Wigner transformation.

\subsection{Kravchuk Matrices}
\label{sec:kravchuk}
A central object for our analysis is the following matrix
\begin{definition}{(Kravchuk matrix)}
    The \emph{Kravchuk matrix} of order $\ell$, denoted by $\mbf{M}^{(\ell)} \in \mathds{Z}^{(\ell + 1) \times (\ell + 1)}$, is an integer-valued matrix whose elements are symmetric polynomials evaluated on $\pm 1$.
\begin{align}
    M^{(\ell)}_{jk} \coloneqq e_k( -\mbf{1}_j \oplus \mbf{1}_{\ell - j}) \mathrm{,}
    \label{eq:mdef}
\end{align}
where $j$, $k \in [\ell] \cup \{0\},$ $e_k(\boldsymbol{v})$ is the degree-$k$ elementary symmetric polynomial in the elements of the array $\boldsymbol{v}$, and $\mbf{1}_j$ is the length-$j$ vector of all ones.
\end{definition}
The Walsh-Hadamard matrix of size $2^n$ has matrix elements $(-1)^{\mbf{x}\cdot \mbf{y}}$ in the standard basis, where $\mbf{x}$ and $\mbf{y}$ are bit-strings of length $n$ that label matrix elements. 
The elements of the Kravchuk matrix are obtained by summing $\sum_{\mbf{y} | k} (-1)^{\mbf{x}\cdot \mbf{y}}$ to get a matrix element that depends only on the Hamming weight $k$ and the bit-string $\mbf{x}$. 
In fact, this expression depends only on the Hamming weight of $\mbf{x}$. 
By keeping one copy of the resulting sum for each value that $|\mbf{x}|$ can take, we get the Kravchuk matrix elements in \cref{eq:mdef}. 

The Kravchuk matrix plays an analogous role in FACES as the Walsh-Hadamard matrix does in the original (Clifford) ACES protocol. 
Namely, it is the essential link between the eigenvalues and error probabilities of a FLO-twirled channel, as explained below. 

We now give some alternative expressions for $\mbf{M}^{(\ell)}$ which clearly show that this matrix can be computed efficiently. 
Let us introduce the notation $\mathrm{coeff}_k\bigl(p(x)\bigr)$ for the coefficient of the degree-$k$ monomial term of the polynomial $p(x)$, where the variable $x$ will be understood from context. 
Then we have the following lemma. 

\begin{lemma}[Kravchuk matrix relations]
We have the alternate expressions for $M^{(\ell)}_{jk}$
\begin{align}
    M_{jk}^{(\ell)} &= \mathrm{coeff}_k\bigl((1 - u)^j (1 + u)^{\ell - j} \bigr)  \label{eq:polyexpansion} \\
    &= \sum_{\bsalph|k, \ell} (-1)^{|\bsalph \cap [j]|} \label{eq:binsum} \\
    M_{jk}^{(\ell)} &= \sum_{m = \max{(0, k + j - \ell)}}^{\min{(j, k)}} \binom{\ell - j}{k - m} \binom{j}{m} (-1)^m\,.
    \label{eq:binomexpansion}
\end{align}
Additionally, $\mathbf{M}^{(\ell)}$ is proportional to an involution
\begin{align}
    (\mathbf{M}^{(\ell)})^2 = 2^{\ell} \mathbf{I} \mathrm{.}
    \label{eq:involution}
\end{align}
\label{lem:mcoeffs}
\end{lemma}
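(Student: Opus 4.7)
The plan is to treat \cref{eq:polyexpansion} as the master identity, deduce \cref{eq:binsum} and \cref{eq:binomexpansion} from it as essentially immediate consequences, and then obtain the involution \cref{eq:involution} via the Walsh--Hadamard interpretation already sketched in the prose above the lemma statement.

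First I would establish \cref{eq:polyexpansion} by invoking the generating function for elementary symmetric polynomials: for any $\mathbf{v} = (v_1, \ldots, v_\ell)$, one has $\prod_{i=1}^\ell(1 + u v_i) = \sum_{k=0}^\ell u^k\, e_k(\mathbf{v})$. Substituting $\mathbf{v} = -\mathbf{1}_j \oplus \mathbf{1}_{\ell-j}$ collapses the product to $(1-u)^j(1+u)^{\ell-j}$, and reading off the coefficient of $u^k$ yields the claim. Then \cref{eq:binsum} is essentially the definition of $e_k$: the summand $\prod_{i \in \bs{\alpha}} v_i$ contributes a factor of $-1$ for each index in $\bs{\alpha} \cap [j]$ and $+1$ for the rest, yielding $(-1)^{|\bs{\alpha} \cap [j]|}$. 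Finally, \cref{eq:binomexpansion} follows from \cref{eq:polyexpansion} by binomial-expanding each factor and collecting the coefficient of $u^k$; the stated summation range is exactly the set of $m$ for which neither $\binom{j}{m}$ nor $\binom{\ell-j}{k-m}$ vanishes.

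For the involution \cref{eq:involution}, I would lean on the Walsh--Hadamard viewpoint. Let $H$ denote the $2^\ell \times 2^\ell$ Walsh--Hadamard matrix, with $H^2 = 2^\ell I$. Given any $g : \{0,\ldots,\ell\} \to \mathbb{C}$, lift it to $f(\mathbf{x}) = g(|\mathbf{x}|)$ on $\{0,1\}^\ell$. Then $(Hf)(\mathbf{y}) = \sum_k g(k) \sum_{\mathbf{x}|k}(-1)^{\mathbf{x}\cdot \mathbf{y}}$, and the inner sum depends on $\mathbf{y}$ only through $|\mathbf{y}|$, with that dependence equal to $M^{(\ell)}_{|\mathbf{y}|, k}$ by the construction given in the paper. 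Hence $Hf$ is again Hamming-weight-dependent, with reduced profile $\tilde{g}(j) = (\mathbf{M}^{(\ell)} g)(j)$. Applying $H$ a second time multiplies $f$ by $2^\ell$ while preserving the Hamming-weight structure, so $(\mathbf{M}^{(\ell)})^2 g = 2^\ell g$ for every $g$, which is \cref{eq:involution}.

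The main obstacle I anticipate is justifying that the inner sum $\sum_{\mathbf{x}|k}(-1)^{\mathbf{x}\cdot \mathbf{y}}$ genuinely depends on $\mathbf{y}$ only through its Hamming weight---this is just permutation invariance of the summation domain under relabeling of coordinates, but it is the crucial step that reduces the $2^\ell$-dimensional Walsh--Hadamard identity to the $(\ell+1)$-dimensional Kravchuk identity. A purely algebraic alternative for \cref{eq:involution} would be to insert \cref{eq:polyexpansion} into $\sum_k M^{(\ell)}_{ik} M^{(\ell)}_{kj}$ and evaluate the resulting double sum via a Vandermonde-like identity, but the Walsh--Hadamard route is cleaner and dovetails with the narrative already established in the paper.
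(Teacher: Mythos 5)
Your derivations of \cref{eq:polyexpansion}, \cref{eq:binsum}, and \cref{eq:binomexpansion} follow essentially the same path as the paper (generating function for $e_k$, i.e.\ Vieta's formulas; expansion of $e_k$ over subsets; binomial expansion and collection of the $u^k$ coefficient). The interesting divergence is your proof of the involution \cref{eq:involution}. The paper instead works entirely with the $(\ell+1)$-dimensional polynomial representation: it shows that $\mbf{M}^{(\ell)}$ acts on the monomial vector $\boldsymbol{v}(u)=(1,u,\ldots,u^\ell)$ by a M\"obius substitution, $\mbf{M}^{(\ell)}\boldsymbol{v}(u)=(1+u)^{\ell}\,\boldsymbol{v}\bigl((1-u)/(1+u)\bigr)$, and then applies this twice to recover $2^{\ell}\boldsymbol{v}(u)$, concluding by density of the monomial vectors. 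Your argument lifts instead to the $2^\ell$-dimensional Walsh--Hadamard operator and restricts to the permutation-invariant (Hamming-weight-symmetric) subspace, on which $H$ acts as $\mbf{M}^{(\ell)}$ by exactly the reduction sketched in the paper's prose. Both are correct. Your route makes the involution property essentially free from $H^2 = 2^\ell I$ and is more faithful to the paper's own narrative motivation for the Kravchuk matrix, but it relies on the intermediate permutation-invariance observation (which you correctly flag as the crux) and on a nontrivial lifting. The paper's route is more self-contained, requires no lift to exponentially many dimensions, and incidentally produces the intermediate identity $\mbf{M}^{(\ell)}\boldsymbol{v}(u)=(1+u)^\ell\boldsymbol{v}\bigl(\tfrac{1-u}{1+u}\bigr)$, which may be useful elsewhere. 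Either would be a valid proof; just make sure, if you keep the Walsh--Hadamard route, to spell out the permutation-invariance step (conjugating the summation index by a coordinate permutation that sends $\mathbf{y}$ to a canonical representative of its weight class) rather than leaving it as an assertion.
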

\begin{proof}
The proof is the content of \cref{sec:erelations}.
\end{proof}

While a proof of the above relations can be found in existing literature, we include it here in the interest of keeping our results self-contained.
Where clear from context, we drop the label $\ell$ from the Kravchuk matrix and any related matrices.

In some cases, we need to account for phases that accumulate from commuting Majorana operators past one another. 
To this end, it is convenient to define a relative of the Kravchuk matrix as follows. 
\begin{definition}{(Antipode involution)}
    The \emph{antipode} involution matrix, $\mbf{s}^{(\ell)}$ for $\ell$ even, is the permutation matrix with components
    \begin{align}
        s^{(\ell)}_{jk} = \delta_{jk} \delta_{j \ (\mathrm{mod} \ 2), 0}  + \delta_{
    \ell - j, k} \delta_{j \ (\mathrm{mod} \ 2), 1} \mathrm{.}
    \end{align}
\end{definition}
The antipode involution takes indices with fixed odd weight and exchanges $j \leftrightarrow \ell-j$. 
It satisfies the following relations due to the particular symmetry properties of $\mbf{M}$
\begin{lemma}{(Kravchuk-antipode relations)}
The Kravchuk matrix and antipode involution commute, and their product has components
\begin{align}
    (\mbf{s}^{(\ell)} \cdot \mbf{M}^{(\ell)})_{jk} = (\mbf{M}^{(\ell)} \cdot \mbf{s}^{(\ell)})_{jk} = (-1)^{jk} e_k(-\mbf{1}_j \oplus \mbf{1}_{\ell - j}) \mathrm{.}
\label{eq:permmdef}
\end{align}
\label{lem:karelations}
\end{lemma}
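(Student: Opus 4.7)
The plan is to reduce both identities to two symmetry relations for the entries of $\mbf{M}^{(\ell)}$, derived directly from the generating-function expression \cref{eq:polyexpansion} provided by \cref{lem:mcoeffs}. Once those symmetries are in hand, a short parity-based case analysis on the rows/columns selected by $\mbf{s}^{(\ell)}$ will verify both products match the claimed closed form, and commutation will follow because both products are shown to equal the same expression.

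First I would establish the two symmetries
\begin{align}
M^{(\ell)}_{\ell - j,\, k} = (-1)^k M^{(\ell)}_{j,k}, \qquad M^{(\ell)}_{j,\, \ell - k} = (-1)^j M^{(\ell)}_{j,k}.
\end{align}
The first follows by replacing $u \mapsto -u$ in the generating polynomial $p_j(u) = (1-u)^j (1+u)^{\ell - j}$, which swaps $j \leftrightarrow \ell - j$ and multiplies the degree-$k$ coefficient by $(-1)^k$. The second follows from the reciprocal transformation $u^\ell p_j(1/u) = (-1)^j p_j(u)$, which maps the coefficient of $u^k$ to the coefficient of $u^{\ell - k}$ up to a global sign $(-1)^j$.

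Next I would handle $(\mbf{s}^{(\ell)} \mbf{M}^{(\ell)})_{jk}$. Since $\mbf{s}^{(\ell)}$ acts on rows as the identity when $j$ is even and as $j \leftrightarrow \ell - j$ when $j$ is odd, the product is $M^{(\ell)}_{j,k}$ in the even case and $M^{(\ell)}_{\ell - j, k}$ in the odd case. Applying the first symmetry, both cases collapse to $(-1)^{jk} M^{(\ell)}_{j,k}$, because the prefactor $(-1)^k$ that appears only for odd $j$ is exactly $(-1)^{jk}$ restricted to that parity. The argument for $(\mbf{M}^{(\ell)} \mbf{s}^{(\ell)})_{jk}$ is symmetric: $\mbf{s}^{(\ell)}$ acts on columns trivially for even $k$ and swaps $k \leftrightarrow \ell - k$ for odd $k$, so the second symmetry yields the same answer $(-1)^{jk} M^{(\ell)}_{j,k}$. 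This simultaneously proves \cref{eq:permmdef} and the commutation $\mbf{s}^{(\ell)} \mbf{M}^{(\ell)} = \mbf{M}^{(\ell)} \mbf{s}^{(\ell)}$.

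There is no real obstacle here — the only place one might stumble is being careful that $\mbf{s}^{(\ell)}$ is defined only for even $\ell$, which guarantees that the swap $j \leftrightarrow \ell - j$ preserves the parity of $j$ (so the permutation is well-defined as an involution on odd-indexed rows). A light sanity check that $\mbf{s}^{(\ell)}$ is indeed an involution, together with the two polynomial symmetries above, is all the machinery needed; the rest is just bookkeeping of signs.
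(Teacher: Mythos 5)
Your proposal is correct and follows essentially the same route as the paper: both derive the symmetry $M^{(\ell)}_{\ell-j,k}=(-1)^k M^{(\ell)}_{j,k}$ (you via $u\mapsto -u$ in the generating polynomial, the paper via $e_k(-\mathbf{v})=(-1)^k e_k(\mathbf{v})$, which is the same fact) to handle $\mbf{s}\mbf{M}$, and both use the reciprocal-polynomial identity $u^\ell p_j(1/u)=(-1)^j p_j(u)$ to obtain $M^{(\ell)}_{j,\ell-k}=(-1)^j M^{(\ell)}_{j,k}$ for $\mbf{M}\mbf{s}$. Your remark on the necessity of $\ell$ even for $\mbf{s}^{(\ell)}$ to preserve parity is a nice explicit point that the paper leaves implicit.
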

\begin{proof}
The proof is the content of \cref{sec:kaproof}.
\end{proof}

Finally, we define the diagonal matrix of binomial coefficients $\mbf{d}^{(\ell)}$ by
\begin{align}
    d^{(\ell)}_{jk} = \binom{\ell}{k} \delta_{jk}\mathrm{.}
\end{align}
It is straightforward to see that $\mbf{s}^{(\ell)}$ and $\mbf{d}^{(\ell)}$ commute as well, but $\mbf{M}^{(\ell)}$ and $\mbf{d}^{(\ell)}$ do not commute.
Each of these transformations will be important for our estimation procedure.

\subsection{Noise Channels and Twirling}
\label{sec:mgn}
A quantum noise channel is described by a completely positive trace-preserving (CPTP) map, admitting a Kraus representation
\begin{align}
    \mce(X) = \sum_j K^{\vphantom{\dagger}}_j X K_j^{\dagger},
    \label{eq:krauszdef}
\end{align}
where $\{K_j\}$ are the Kraus operators satisfying \begin{align}\sum_j K_j^{\dagger} K^{\vphantom{\dagger}}_j = I \mathrm{.}
\end{align}
For a unitary ensemble $\mcs$ (either discrete or continuous), we define the $\mcs$-twirling of $\mce$ as
\begin{align}
    \mce^{\mcs} = \mathbb{E}_{\mcu \in \mcs} \bigl[\mc{U} \circ \mce \circ \mc{U}^{\dagger}\bigr]
\end{align}
where we employ the unitary channel notation from \cref{sec:matchgates}, and $\mathbb{E}_{x}[f(x)]$ denotes the expectation value over $x$ of $f(x)$.
The main examples we consider are Pauli and FLO twirling, where the expectation is given respectively by
\begin{align}
\mce^{\ds{P}(n)}(X) &= 4^{-n} \sum_{\mbx \in \ss{P}(n)} \sum_{j} \left(\sigma_{\mbx} K_j \sigma_{\mbx} \right) X \left(\sigma_{\mbx} K_j \sigma_{\mbx} \right)^{\dagger}\,,
\label{eq:ptwirlkraus}
\end{align}
and
\begin{align}
    \mce^{\ds{F}(2n)}(X) &= \int_{U \in \ds{F}(2n)} dU \ \sum_j \mc{U}\left(K_j\right) X \ \mc{U}\left( K_j \right)^{\dagger}
    \label{eq:mgtwirldef}
\end{align}
where \cref{eq:mgtwirldef} is a Haar average over the group of FLO unitaries. 
Note also that twirling is idempotent. 
That is, twirling twice is the same as twirling once. 
Additionally, $\mce^{\mcs}$ commutes with every element of $\mcs$ when $\mcs$ is a group, by construction. 

A channel that is invariant under Pauli twirling is called a Pauli channel. 
It is easy to show that Pauli twirling a general channel results in a Pauli channel~\cite[Lemma 5.2.4]{dankert2005efficient}.
Pauli channels can be parameterized as 
\begin{align}
\label{eq:ptwirl}
    \mce^{\ds{P}(n)}(X) &= \sum_{\mbf{x} \in \ss{P}(n)} p_{\mbf{x}} \sigma_{\mbf{x}} X \sigma_{\mbf{x}} 
\end{align}
where the $\{p_{\mbf{x}}\}_{\mbf{x} \in \ss{P}(n)}$ form a probability distribution over bit-strings of length $2n$. 
These are called the \textit{Pauli error probabilities}. 
Starting from a general channel with Kraus operators $K_j \equiv \sum_{j} c_{j\mbf{x}} \sigma_{\mbf{x}}$ as in \cref{eq:krauszdef}, the relationship to the Pauli error probabilities of the twirled channel in \cref{eq:ptwirl} is~\cite[Lemma 5.2.4]{dankert2005efficient}
\begin{align}
    p_{\mbf{x}} = \sum_{j} |c_{j\mbf{x}}|^{2}\,.
    \label{eq:krausprob}
\end{align}
 
Similarly, a \emph{FLO-twirled channel} is one that is invariant under FLO twirling.
Since all Pauli operators are matchgates, any FLO-twirled channel is also invariant under Pauli twirling, and is therefore a Pauli channel.
Note the following contrast between Pauli twirling and FLO twirling: Pauli twirling a unitary Pauli (i.e. rank-1)  channel leaves that channel invariant, but it is \emph{not} the case that FLO unitary channels are invariant under FLO twirling since they generally become non-unitary after twirling. 
FLO-twirled channels on $2n$ modes can be parameterized as 
\begin{align}
    \mce^{\ds{F}(2n)}(X) &= \sum_{k = 0}^{2n} \binom{2n}{k}^{-1} q_{k} \sum_{\bsalph | k, 2n} \mm{\bsalph}^{\vphantom{\dagger}} X \mm{\bsalph}^{\dagger}\,,
    \label{eq:mgtwirl}
\end{align}
where $q_k$ form a probability distribution over $\{0,\ldots,2n\}$. 
We call the $q_k$ the \emph{fermionic error probabilities}. 

\begin{lemma}[FLO twirl of a Pauli channel]
Consider a Pauli channel $\mce^{\ds{P}(n)}$ with Pauli error probabilities $p_{\mathbf{x}}$. 
Let $\mbx(\bsalph)$ denote the Pauli string associated to the subset of Majorana modes $\bsalph$ by the Jordan-Wigner transformation. 
Then $\mce^{\ds{F}(2n)} = (\mce^{\ds{P}(n)})^{\ds{F}(2n)}$, and the fermionic error probabilities $q_k$ of $\mce^{\ds{F}(2n)}$ are given by
\begin{align}    
q_k = \sum_{\bsalph | k} p_{\mbx(\bsalph)} \,. \label{eq:ptomgrenorm}
\end{align}
\label{lem:ptwirl}
\end{lemma}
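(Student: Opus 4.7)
The plan is to carry out the FLO twirl term-by-term on the Pauli decomposition of $\mce^{\ds{P}(n)}$, using Schur orthogonality for the $k$-th compound representation of $O(2n)$ introduced after \cref{eq:configtransform}. First I would dispose of the identity $\mce^{\ds{F}(2n)}=(\mce^{\ds{P}(n)})^{\ds{F}(2n)}$ by the standard subgroup-absorption argument: since $\ds{P}(n)$ (with the appropriate phases) embeds in $\ds{F}(2n)$, the double expectation $\mathbb{E}_{\mcu\in\ds{F}}\mathbb{E}_{\mcv\in\ds{P}}[\mcu\circ\mcv\circ\mce\circ\mcv^{\dagger}\circ\mcu^{\dagger}]$ collapses to $\mathbb{E}_{\mcu\in\ds{F}}[\mcu\circ\mce\circ\mcu^{\dagger}]$, because for each fixed $\mcv$ the translated measure on $\mcu\mcv$ equals the Haar measure on $\ds{F}(2n)$.

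Next, I would rewrite $\mce^{\ds{P}(n)}$ in the Majorana basis. Each Pauli $\sigma_{\mbx}$ equals (up to a phase) the Majorana monomial $\mm{\bsalph(\mbx)}$ under Jordan-Wigner, and any such phase cancels inside the conjugation channel $X\mapsto \sigma_{\mbx}X\sigma_{\mbx}$, so
\begin{align}
\mce^{\ds{P}(n)}(X) \;=\; \sum_{\bsalph\subseteq[2n]} p_{\mbx(\bsalph)}\,\mm{\bsalph}\, X\,\mm{\bsalph}^{\dagger}\,.
\end{align}
Applying the FLO twirl and substituting \cref{eq:configtransform} with $\mbf{R}(U)\in O(2n)$ yields
\begin{align}
(\mce^{\ds{P}(n)})^{\ds{F}(2n)}(X) \;=\; \sum_{\bsalph} p_{\mbx(\bsalph)}\sum_{\bsbet,\bsbet'|\,k}\,\mm{\bsbet} X\mm{\bsbet'}^{\dagger}\,\int_{O(2n)}\! d\mbf{R}\;\det(\mbf{R}_{\bsalph\bsbet})\det(\mbf{R}_{\bsalph\bsbet'})\,,
\end{align}
where $k=|\bsalph|$ and I have pushed the Haar measure on $\ds{F}(2n)$ forward to $O(2n)$, which is legitimate since the integrand depends on $U$ only through $\mbf{R}$.

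The matrix element $\det(\mbf{R}_{\bsalph\bsbet})$ is exactly the $(\bsalph,\bsbet)$-entry of the real, orthogonal $k$-th compound representation, which the excerpt has already invoked as irreducible of dimension $\binom{2n}{k}$. Schur orthogonality for a real irreducible representation therefore gives
\begin{align}
\int_{O(2n)}\! d\mbf{R}\;\det(\mbf{R}_{\bsalph\bsbet})\det(\mbf{R}_{\bsalph\bsbet'})\;=\;\binom{2n}{k}^{-1}\delta_{\bsbet\bsbet'}\,,
\end{align}
with the $k=0$ and $k=2n$ cases verified by inspection ($\gamma_{\emptyset}=I$ and $\gamma_{[2n]}$ both commute with every FLO conjugation up to a global sign that cancels in the channel). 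Substituting back and regrouping by $k$ gives
\begin{align}
(\mce^{\ds{P}(n)})^{\ds{F}(2n)}(X) \;=\; \sum_{k=0}^{2n}\binom{2n}{k}^{-1}\!\left(\sum_{\bsalph|k} p_{\mbx(\bsalph)}\right)\sum_{\bsbet|k}\mm{\bsbet} X\mm{\bsbet}^{\dagger}\,,
\end{align}
which matches the canonical form \cref{eq:mgtwirl} with $q_k=\sum_{\bsalph|k} p_{\mbx(\bsalph)}$, as claimed.

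The only real obstacle is bookkeeping. One must be careful that the Jordan-Wigner phases relating $\sigma_{\mbx}$ to $\mm{\bsalph}$ cancel cleanly, that the daggers on $\mm{\bsbet}$ absorb the $(-1)^{k(k-1)/2}$ consistently on both sides of $X$, and that the Haar integral on $\ds{F}(2n)$ really reduces to Haar on $O(2n)$ for phase-invariant integrands. Once these sign and measure issues are settled, the rest is a direct application of Schur orthogonality and the irreducibility of the compound representation.
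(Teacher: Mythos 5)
Your proof is correct, and it takes the same first two steps as the paper's (the Pauli-to-Majorana rewrite and the push-forward of the Haar integral to $O(2n)$ via \cref{eq:configtransform}), but it diverges at the key step of evaluating
\begin{align}
\int_{O(2n)}\!d\mbf{R}\;\det(\mbf{R}_{\bsalph\bsbet})\det(\mbf{R}_{\bsalph\bset})\,.
\end{align}
You invoke Schur orthogonality for the $k$-th compound representation of $O(2n)$, leaning on the irreducibility claim made in the text after \cref{eq:configtransform}. The paper instead avoids Schur orthogonality entirely: it argues that the cross terms $\bsbet\neq\bset$ vanish because $\mce^{\ds{F}(2n)}$ is itself Pauli-twirl invariant (so the twirled channel must be diagonal in the Majorana-monomial basis), and then computes the surviving diagonal integral by averaging the row index $\bsalph$ over all permutations and applying the Cauchy--Binet formula twice to collapse $\sum_{\bset}\lvert\det(\mbf{R}_{\bset\bsbet})\rvert^2$ to $\det[(\mbf{R}^{\mathrm{T}}\mbf{R})_{\bsbet\bsbet}]=1$. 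Your route is shorter and conceptually cleaner if one accepts that the compound representation is of real type (i.e., its complexification is still irreducible, which is needed for the Schur relation in the real form you use); the paper's route is more elementary and self-contained, replacing the representation-theoretic input with the idempotence/invariance structure of the twirl plus a direct Cauchy--Binet calculation. If you adopt your version, I would spell out that the representation is not merely irreducible but of real type, since that is precisely what lets you drop the complex conjugate from the usual Schur integral.
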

\begin{proof}
    See \cref{sec:twirlrelations}.
\end{proof}

We remark that the FLO twirl of a \emph{general} channel has fermionic error probabilities that are related to the Kraus operators by concatenating the expressions in \cref{eq:krausprob,,eq:ptomgrenorm}. 

We model a noisy implementation $\widetilde{\mcv}$ of a FLO unitary $\mcv$ as a noise map $\mce$ followed by the ideal channel $\mcv$, with the intuition that $\mce$ is close to the identity channel. 
This provides the following
\begin{definition}{(Unitary channel noise)}
For a noisy implementation $\widetilde{\mcv}$ of a unitary conjugation channel $\mcv$, we define the noise map corresponding to $\widetilde{\mcv}$ by
\begin{align}
\mce \coloneqq \mcv^{\dagger}\widetilde{\mcv},
\end{align}
and this gives $\widetilde{\mcv} = \mcv \mce$.
\label{def:channelnoise}
\end{definition}
We will eventually require that $\mce$, so defined, is not too noisy in a sense made precise in \cref{sec:modelassumptions}. 
Prepending a random FLO unitary $\mc{U}^{\dagger}$ to $\widetilde{\mcv}$ and appending the unitary $\mc{U}$ conjugated by $\mcv$ gives the twirled noise map followed by the ideal channel in expectation.
The following statement of this is a standard result. 

\begin{lemma}[Noise Twirling]
\label{lem:noisetwirl}
For $\mc{S} \in \{\ds{P}(n), \ds{F}(2n)\}$, we have
\begin{align}
\widetilde{\mcv}^{\mcs} \coloneqq \int_{\mc{S}} d\mcu \left(\mcv \mcu \mcv^{\dagger} \right) \left(\mcv \mce \right) \mcu^{\dagger} = \mcv \mce^{\mc{S}}\,.
\label{eq:lem1line1}
\end{align}
\end{lemma}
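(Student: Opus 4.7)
The plan is to treat this as a direct algebraic identity that follows from three simple moves: substitute the definition of the noise map, use unitarity of $\mcv$ to cancel, and recognize the resulting integral as the twirl of $\mce$.

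First, I would invoke \cref{def:channelnoise}, which gives $\widetilde{\mcv} = \mcv \mce$, to rewrite the integrand in \cref{eq:lem1line1} as
\[
\bigl(\mcv \mcu \mcv^{\dagger}\bigr)\bigl(\mcv \mce\bigr)\mcu^{\dagger} .
\]
Next, because $\mcv$ is a unitary conjugation channel, $\mcv^{\dagger}\mcv = \mci$ as superoperators, so the middle pair collapses and the integrand simplifies to $\mcv\,\mcu\,\mce\,\mcu^{\dagger}$. Finally, pulling $\mcv$ out of the integral by linearity gives
\[
\int_{\mcs} d\mcu\, \mcv\,\mcu\,\mce\,\mcu^{\dagger} = \mcv \int_{\mcs} d\mcu\, \mcu\,\mce\,\mcu^{\dagger} = \mcv\,\mce^{\mcs},
\]
where the last equality is exactly the definition of $\mcs$-twirling from \cref{sec:mgn}.

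There is essentially no computational obstacle. In fact, the argument uses no property of $\mcs$ beyond linearity of the integral, so the statement holds for any unitary ensemble. The reason the lemma singles out $\mcs \in \{\ds{P}(n), \ds{F}(2n)\}$ is experimental: for the prepended $\mcu^{\dagger}$ and the appended $\mcv \mcu \mcv^{\dagger}$ to each be implementable as FLO operations in the protocol of \cref{fig:controlflow}(c), the ensemble $\mcs$ must be closed under conjugation by $\mcv$ within the set of FLO gates. For $\mcs = \ds{F}(2n)$ this follows immediately from $\ds{F}(2n)$ being a group containing $\mcv$; for $\mcs = \ds{P}(n)$ one additionally uses that Paulis embed as matchgates under the Jordan-Wigner map, so $V U V^{\dagger}$ is a FLO unitary whenever $V$ is FLO and $U$ is Pauli.
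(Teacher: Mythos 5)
Your argument is correct and matches the paper's, which declares the lemma ``Immediate from the definition of twirling'': after collapsing $\mcv^{\dagger}\mcv = \mci$ and pulling $\mcv$ out by linearity, the remaining integral is exactly $\mce^{\mcs}$. Your closing remark about why the lemma is stated only for $\mcs \in \{\ds{P}(n), \ds{F}(2n)\}$ --- namely, that the conjugated unitaries $\mcv\mcu\mcv^{\dagger}$ must remain implementable as FLO gates --- is a reasonable and useful observation that the paper leaves implicit, though it is not needed for the algebraic identity itself.
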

\begin{proof}
    Immediate from the definition of twirling. 
\end{proof}

In practice, twirling is done by taking a sample average over the ensemble of unitary circuits appearing in the twirl~\cite{knill2005quantum,wallman2016noise}. 
In the Pauli-twirl case, the overhead that this induces is negligible in the sense that the depth-one Pauli unitary circuits only slightly increase the depth overall, or can often be compiled into the underlying circuit~\cite{wallman2016noise}. 
However, the best-known implementations of a generic FLO circuit require $\mc{O}(n)$ depth \cite{jiang2018quantum}.
It would be interesting to investigate reducing this depth penalty by sampling generators of the FLO group and analyzing the mixing time of the resulting random walk on $O(2n)$, similar to what was done in ref.~\cite{frana2018approximate} for finite groups, but this is beyond the scope of the present work.

\subsection{Channel Eigenvalues}
\label{sec:channeleig}

If $\mce^{\ds{P}(n)}$ is a Pauli channel, then $\mce^{\ds{P}(n)}$ satisfies an eigenvector relation with respect to the Pauli operators
\begin{align}
    \mce^{\ds{P}(n)}(\sigma_{\mbx}) = \lambda_{\mbx} \sigma_{\mbx}\,.
\end{align}
The $\lambda_{\mbx}$ are called the \emph{Pauli channel eigenvalues}. 
They are related to the Pauli error probabilities by 
\begin{align}
    \lambda_{\mbx} = \sum_{\mby \in \ss{P}(n)} (-1)^{\avg{\mbx, \mby}} p_{\mby}
    \label{eq:whtransform}
\end{align}
where $\avg{\mbx, \mby}$ is the symplectic inner product between bit strings $\mbx$, $\mby \in \{0, 1\}^{\times 2n}$. 
Up to a column permutation, this is the Walsh-Hadamard transformation. 

Now let $\mce^{\ds{F}(2n)}$ be a FLO-twirled channel. 
Then there is a similar eigenvalue relationship where the eigenvectors are Majorana monomials.
From the parameterization in \cref{eq:mgtwirl}, we can derive the eigenvalue relation 
\begin{align}
    \mce^{\ds{F}(2n)}(\mm{\bsbet}) &= \sum_{k = 0}^{2n} \binom{2n}{k}^{-1} q_k \left(\sum_{\bsalph|k} \mm{\bsalph}^{\dagger} \mm{\bsbet} \mm{\bsalph}\right) \\
     &= \left\{\sum_{k = 0}^{2n} \left[\sum_{\bsalph | k}(-1)^{|\bsbet|k + |\bsalph \cap \bsbet|}\right] \binom{2n}{k}^{-1} q_k \right\} \mm{\bsbet}\,. \nonumber      
\end{align}
The coefficient in braces can be simplified using \cref{lem:mcoeffs} and the definitions in \cref{sec:kravchuk} to give the \emph{FLO-twirled channel eigenvalue}
\begin{align}
    \mge_{j} &= \sum_{k = 0}^{2n} (-1)^{jk}  e_k(-\mbf{1}_j \oplus \mbf{1}_{2n - j}) \binom{2n}{k}^{-1} q_k
    \label{eq:lamtopmm} \\
    &= \left[(\mbf{s} \cdot \mbf{M} \cdot \mbf{d}^{-1}) \cdot \mbf{q}\right]_j \mathrm{.}
\end{align}
This constitutes a proof of the following lemma, for which we adopt a vector notation.

\begin{lemma}(FLO-twirled Channel Eigenvalues)
\label{lem:mdef}
We have
\begin{align}
\label{eq:lamtopmmmat}
\boldsymbol{\mge} = \left(\mbf{s} \cdot \mbf{M} \cdot \mbf{d}^{-1}\right) \cdot \mbf{q} \,.
\end{align}
\end{lemma}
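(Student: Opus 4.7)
The plan is to verify \cref{eq:lamtopmmmat} by directly computing the action of $\mathcal{E}^{\mathbb{F}(2n)}$ on an arbitrary Majorana monomial $\gamma_{\boldsymbol{\beta}}$, starting from the parameterization \cref{eq:mgtwirl}, and reading off the eigenvalue. The main work is reduced to (i) a careful accounting of the Majorana anticommutation signs, and (ii) recognizing the resulting coefficient as a product of matrices already defined in \cref{sec:kravchuk}.

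First I would substitute $X = \gamma_{\boldsymbol{\beta}}$ into \cref{eq:mgtwirl} and simplify $\gamma_{\boldsymbol{\alpha}} \gamma_{\boldsymbol{\beta}} \gamma_{\boldsymbol{\alpha}}^{\dagger}$ for a subset $\boldsymbol{\alpha}$ of size $k = |\boldsymbol{\alpha}|$. Using \cref{eq:car}, each element of $\boldsymbol{\alpha}$ either lies in $\boldsymbol{\beta}$ (in which case anticommuting it through $\gamma_{\boldsymbol{\beta}}$ yields a factor accounting for all the other Majoranas in $\boldsymbol{\beta}$, then squaring to $I$) or anticommutes past every Majorana in $\boldsymbol{\beta}$. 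A bookkeeping argument then gives
\begin{align}
\gamma_{\boldsymbol{\alpha}} \gamma_{\boldsymbol{\beta}} \gamma_{\boldsymbol{\alpha}}^{\dagger} = (-1)^{|\boldsymbol{\beta}|\,k + |\boldsymbol{\alpha} \cap \boldsymbol{\beta}|}\, \gamma_{\boldsymbol{\beta}}\,,
\end{align}
which is exactly the sign already quoted in the paragraph preceding the lemma. This confirms that $\gamma_{\boldsymbol{\beta}}$ is an eigenvector of $\mathcal{E}^{\mathbb{F}(2n)}$.

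Second, I would sum over all $\boldsymbol{\alpha}$ of weight $k$. The factor $(-1)^{|\boldsymbol{\beta}|k} = (-1)^{jk}$ with $j = |\boldsymbol{\beta}|$ factors out, and what remains is $\sum_{\boldsymbol{\alpha}|k,2n} (-1)^{|\boldsymbol{\alpha} \cap \boldsymbol{\beta}|}$. By \cref{eq:binsum} of \cref{lem:mcoeffs}, this equals $M_{jk}^{(2n)} = e_k(-\mathbf{1}_j \oplus \mathbf{1}_{2n-j})$ and depends only on $j$, not on $\boldsymbol{\beta}$ itself. This simultaneously shows that the eigenvalue $\xi_j$ is well-defined (a function of $j$ alone) and yields the closed form
\begin{align}
\xi_j = \sum_{k=0}^{2n} (-1)^{jk}\, e_k(-\mathbf{1}_j \oplus \mathbf{1}_{2n-j})\, \binom{2n}{k}^{-1} q_k\,,
\end{align}
matching \cref{eq:lamtopmm}.

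Finally, I would interpret this expression as a matrix-vector product. By \cref{eq:permmdef} of \cref{lem:karelations}, the coefficient $(-1)^{jk} e_k(-\mathbf{1}_j \oplus \mathbf{1}_{2n-j})$ is exactly $(\mathbf{s} \cdot \mathbf{M})_{jk}$, while the factor $\binom{2n}{k}^{-1}$ is $(\mathbf{d}^{-1})_{kk}$. Multiplying and summing over $k$ gives $\xi_j = [(\mathbf{s} \cdot \mathbf{M} \cdot \mathbf{d}^{-1}) \cdot \mathbf{q}]_j$, establishing \cref{eq:lamtopmmmat}. The main obstacle is the sign-counting step for $\gamma_{\boldsymbol{\alpha}} \gamma_{\boldsymbol{\beta}} \gamma_{\boldsymbol{\alpha}}^{\dagger}$; once that is in hand, the rest is matching definitions.
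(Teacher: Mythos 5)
Your proof is correct and follows exactly the derivation the paper gives in the paragraph preceding the lemma (the paper then labels the proof ``Immediate'' because that derivation has already been carried out in the text): substitute $\gamma_{\boldsymbol{\beta}}$ into \cref{eq:mgtwirl}, compute the conjugation sign $(-1)^{|\boldsymbol{\beta}|k + |\boldsymbol{\alpha}\cap\boldsymbol{\beta}|}$, recognize the $\boldsymbol{\alpha}$-sum as a Kravchuk matrix entry via \cref{eq:binsum}, and then identify the coefficient as $(\mathbf{s}\cdot\mathbf{M}\cdot\mathbf{d}^{-1})_{jk}$ using \cref{lem:karelations}. The sign bookkeeping in your intermediate step is right, and the observation that the result depends only on $|\boldsymbol{\beta}|$ is the crucial point that makes the eigenvalue well-defined.
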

\begin{proof}
    Immediate. 
\end{proof}

As $2^{-\ell/2} \left[\mbf{s}^{(\ell)} \cdot \mbf{M}^{(\ell)}\right]$ is an involution, we can invert this relation and express the probabilities $\mbf{q}$ from \cref{eq:lamtopmmmat} in terms of the eigenvalues $\boldsymbol{\mge}$ as
\begin{align}
    \mbf{q} = 4^{-n} \left(\mbf{d} \cdot \mbf{M} \cdot \mbf{s} \right) \cdot \boldsymbol{\mge} \,.
    \label{eq:inverserelation}
\end{align}
Owing to this linear relation, we can learn either $\mbf{q}$ or $\boldsymbol{\xi}$ to completely characterize a FLO-twirled channel, but one must be careful about the precision in these different cases due to the relative normalization.

A FLO-twirled channel commutes with every element of the group $\ds{F}(2n)$, so we can extend the notion of a channel eigenvalue to circuits as
\begin{align}
    \widetilde{\mcu}^{\mathds{F}(2n)} \coloneqq \prod_g \left(\mcu_g \mce_g^{\ds{F}(2n)}\right) = \left(\prod_g \mcu_g\right) \left(\prod_g \mce_g^{\ds{F}(2n)}\right) \label{eq:circuitprod}.
\end{align}
This gives
\begin{align}
    \widetilde{\mcu}^{\mathds{F}(2n)}(\gamma_{\bsbet}) = \left(\prod_g \xi_{g, k} \right)\mcu(\gamma_{\bsbet}) \mbox{\hspace{5mm}} (k \coloneqq |\bsbet|)
\end{align}
where $\xi_{g, k}$ is the degree-$k$ \emph{gate eigenvalue} for the circuit factor $g$.
We define the product in parentheses as the degree-$k$ circuit eigenvalue for $\widetilde{\mcu}^{\ds{F}(2n)}$ as
\begin{align}
    \Lambda_{k} \coloneqq \prod_g \xi_{g, k} 
    \label{eq:circeigintro}
\end{align}

\section{Measurement Protocol}
\label{sec:protocol}
Efficiently estimating noise in a quantum system requires that we fit to models with certain simplifying assumptions.  
Following from \cref{def:channelnoise}, we assume the parameters describing the noise channel $\mce$ are independent of time and, furthermore, depend only on the gate we apply. 
This is similar to the time-stationary and Markovian assumptions of \cite[def.\ 2]{flammia2019efficient}, but unlike ref.~\cite{flammia2019efficient}, and much related literature, we allow wide latitude for gate dependence.
In that regard our protocol can be viewed as a kind of ``fermionic gate-set tomography'' in the FLO-twirled subspace \cite{blumekohout2013robust, merkel2013selfconsistient, kim2015microwave, nielsen2021gate}.

We broadly refer to the map that takes ideal physical operations---state preparations, unitary gate operations, and measurements---to their parameterized noisy versions as a \emph{model} (and we view the FACES model given in \cref{def:model} as a special case of this). 
Denote the set of idealized physical operations by $\mc{S}$ and the set of noisy operations by $\mc{N}$, then the model is a bijective map $\phi: \mc{S} \rightarrow \mc{N}$.
We only consider circuits composed of gates for which the model is defined.
This is the \emph{realizable setting}.
If the ``true'' noise belongs to the realizable setting, then we provide convergence/correctness guarantees that we can learn that model. 
Otherwise, one of our assumptions is violated. 
We next present some refining assumptions about our models in the following subsection.

\subsection{Model assumptions}
\label{sec:modelassumptions}

Denote a noisy quantum circuit by $\widetilde{\mcu} = \prod_{g = 1}^L \widetilde{\mcu}_g$, where each noisy gate factor is given by $\widetilde{\mcu}_g \coloneqq \mcu_g \mce_{g}$ as in \cref{def:channelnoise}, extended to a circuit of length $L$ as in \cref{eq:circuitprod}.
The ideal part $\mcu_g$ of each noisy gate is known from the physical operation we applied, but the accompanying noise channel $\mce_g$ contains free parameters to be learned.
In principle, this definition allows for the case where we have multiple ways of physically implementing the same idealized circuit, resulting in multiple circuits with the same ideal part but possibly distinct accompanying noise channels. 
As stated above, we will restrict to the setting where the applied noise channel is unique to the corresponding idealized gate (that is, our model $\phi$ is bijective), but it is straightforward to relax this assumption.
In addition, the noisy part of each gate may also depend on the idealized gate itself.
In principle, the twirling gates $\mc{U}^{\dagger}$ and $\mc{V} \mc{U} \mc{V}^{\dagger}$ in \cref{eq:lem1line1} are noisy in the FLO-twirling case as well, but we assume that this noise is only weakly correlated with $\mc{V}$, since we can implement them without necessarily implementing $\mc{V}$ due to the FLO group structure.
Our expectation is that, in practice, this noise will introduce some additional uncertainty on our gate-eigenvalue estimates.
Under these assumptions, the FACES model given in \cref{def:model} is a special case of the general model described in this section where each individual FLO unitary gate is associated to its FLO-twirled noisy version through its collection of gate eigenvalues.
In summary, we assume that any noise model we wish to learn is Markovian in at least two senses: first, that it can be described by a completely positive trace-preserving map, but also that the noise specific to a given gate depends only on the idealized gate itself.

We require that our state preparations and measurements (SPAM) are not too noisy so that we can obtain useful information from experiments.
Specifically, we define $\widetilde{\rho}_{\mbf{x}}$ to be the density matrix describing a noisy preparation of the computational basis state $\ketbra{\mbf{x}}{\mbf{x}}$.
Similarly, let $\widetilde{E}_{\mbf{x}}$ correspond to the noisy POVM element associated to measuring the computational basis state $\ketbra{\mbf{x}}{\mbf{x}}$.
Our first assumption is that SPAM errors are sufficiently weak, as quantified by the following inequality,
\begin{align}
    \bra{\mbf{0}} \widetilde{\rho}_{\mbf{0}} \ket{\mbf{0}} \bra{\mbf{x}} \widetilde{E}_{\mbf{x}} \ket{\mbf{x}} \geq 1 - c
\end{align}
for some sufficiently small constant $c$, say $c \le \frac{1}{3}$.
The noise channels $\mce$ we consider should be also such that their FLO-twirled versions are not too noisy, otherwise they will be hard to learn.
We quantify this mathematically as 
\begin{align}
||\mc{I} - \mce_j^{\ds{F}(2n)}||_{\infty} \leq c'
\end{align}
for all noise channels $\mce_j$ that we are trying to learn. 
Here again, $c'$ is a sufficiently small constant. 
Since our channels are always diagonal in the Pauli basis, this is equivalent to requiring that any noisy-channel eigenvalue lies in the interval $[1-c', 1]$ (in the proof of \cref{thm:samplecomplexity}, we require that $c' = \frac{1}{2}$). 
In practice, we routinely expect that these numbers $c, c'$ are less than $0.05$ or even $0.01$ in some quantum architectures, so these are not strong assumptions. 

\subsection{FLO-Specific Considerations}
\label{sec:floassumptions}

A fundamental challenge attendant to the setting of noisy FLO circuits arises due to the fact that the group $\ds{F}(2n)$ is continuous.
Since we can only ever estimate finitely many noise parameters, we thus assume that the noise can nonetheless be described by a finite set of channels.
For completeness, we choose a particular gate set given by $\mathcal{G} \coloneqq \{e^{i \theta Z_j}\}_{j, \theta} \cup \{G_{j}(H, H)\}_j$ in the qubit picture, where $\theta \in [0, 2\pi)$ is allowed to be an arbitrary continuous parameter for each $j$.
The matrix $H$ is the single-qubit Hadamard operator, and $G_j(A, B)$ is the 2-qubit matchgate as defined in \cref{sec:matchgates}.
While the 2-qubit gates $\{G_{j}(H, H)\}_j$ form a discrete set, the single qubit rotations $\{e^{i \theta Z_j}\}_{j, \theta}$ do not.
We partition the interval $[0, 2\pi)$ into $N$ bins of the form $B_k \coloneqq [\frac{2\pi (k - 1)}{N}, \frac{2 \pi k}{N})$ for $k \in [N]$, and we assume that the single-qubit rotation $e^{i \theta Z_j}$ is subject to a fixed noise channel $\mce_{1, jk}$ if $\theta \in B_k$.
Similarly, we assume that each of the $\{G_{j}(H, H)\}_j$ is subject to a fixed noise channel $\mce_{2, j}$
An interesting related question is whether we can use these methods to learn a finite set of \emph{noise generators} for a continuous family of noise operations.
We leave this question for future work, and we return to it in the Discussion.

Finally, we review some practical details related to twirling over the FLO group.
Importantly, FLO twirling cannot generally be made to be constant-depth. 
One way to see this is to note that, in the qubit picture, matchgate twirling averages together the error probabilities of local Paulis with those of arbitrarily high weight (e.g. $X_1$ and $Z^{\otimes (n - 1)} Y_n$). 
Thus, any FLO unitary ensemble we use to apply the twirling must involve at least one deep circuit if we require our gates to have constant support.
We again return to these points in the Discussion.

\subsection{Protocol Details}
\label{sec:protocoldetails}

Our measurement protocol for obtaining FLO-averaged circuit eigenvalues closely follows the ACES protocol of ref.~\cite{flammia2022averaged}, and it is detailed graphically in \cref{fig:controlflow}.
To learn all the eigenvalues of a FLO-twirled noise channel, we present two complementary protocols, which differ slightly in the required input state, measurement basis, and net FLO circuit applied. 
An important unitary conjugation channel in our analysis is $\mc{U}_{+}(X) = U_+ X U_+^{\dagger}$, related to the unitary $U(R_+)$ defined in \cite{helsen2022matchgate}, and satisfying
\begin{align}
\mc{U}_{+}(\ketbra{\mbf{+}}{\mbf{+}}^{\otimes n}) = \left(\frac{I + i \gamma_1}{\sqrt{2}}\right) \ketbra{\mbf{0}}{\mbf{0}} \left(\frac{I - i \gamma_1}{\sqrt{2}}\right) \mathrm{.}
\end{align}
We have the following decomposition in terms of the gate set from the previous section
\begin{align}
U_+ = e^{-i \frac{\pi}{4}Z_1} \prod_{j = 1}^{n - 1} \left[e^{-i \frac{\pi}{4}Z_{j + 1}} G_j(H, H) e^{i \frac{\pi}{4}Z_j} G_j(H, H) e^{i \frac{\pi}{4}Z_{j + 1}} \right]
\end{align}
up to an overall phase, where the product is taken in ascending order in $j$ from left to right.

In general, we label an ensemble of noisy FLO circuits as
\begin{align}
\mc{C}_r = \left\{\widetilde{\mc{U}}^{(j)} \coloneqq \prod_{m = 1}^{L(j)} \widetilde{\mc{U}}_m^{(j)}\right\}_j
\end{align}
where $r \in \{``x", ``z"\}$, and the net ideal unitary varies with $r$ as follows: 
for all $\widetilde{\mc{U}}^{(j)} \in \mc{C}_z$, we have $\mc{U}^{(j)} = \mc{I}$ (the identity channel), and for all $\widetilde{\mc{U}}^{(j)} \in \mc{C}_x$ we have $\mc{U}^{(j)} = \mc{U}_+$. 
Recall that a tilde is used to indicate the noisy version of a gate or circuit.
To avoid overloading the superscript notation, we denote $\mc{C}_r^{\mathds{F}(2n)}$ as the ensemble whose elements correspond to the FLO-twirled elements of $\mc{C}_r$, and we drop the explicit superscript on the elements of $\mc{C}_r^{\mathds{F}(2n)}$ themselves, with the understanding that they are twirled.

We can organize our circuits into two cases depending on whether the first qubit is prepared in the $x$ or $z$ basis; we call these ``type-$x$'' and ``type-$z$'', respectively. 
The readout data, and thus the expected signal, takes two different forms in these two cases.
We thus organize our empirical estimates of the outcome probabilities according to the circuit (or equivalently, readout) type.
Readout data of type-$z$ is the Hamming weight of an $n$-bit string, whereas readout data of type-$x$ consists of a sign ($\pm$) together with the Hamming weight of an $(n - 1)$-bit string. 
When we collect these data into empirical probability estimates over the relevant Hamming weights, we denote these empirical estimates by $\bfp$ regardless of the $x$/$z$ type. 
Nevertheless, we can distinguish the type by adding an index $s$ that takes values in $\{+,-,0\}$ where $0$ indicates $z$-type data, and $\pm$ corresponds to the sign label of $x$-type data. 
For example, the estimated probability of weight-$j$ in a $+$-outcome $x$-type experiment would be labeled $\bp_{+,j}$, and so on. 

We define the $z$-type Born-rule probability distribution sampled in experiment $j$ as 
\begin{align}
    \bp^{(j)}_{0, \ell} \coloneqq \sum_{\mbf{w}|\ell, n} \bra{\mbf{w}} \widetilde{\mc{U}}^{(j)}(\ketbra{\mbf{0}}{\mbf{0}}) \ket{\mbf{w}}
    \label{eq:ztypebp}
\end{align}
for $\widetilde{\mc{U}}^{(j)} \in \mc{C}_z^{\mathds{F}(2n)}$.
Similarly, define the $x$-type Born-rule probability distribution in experiment $j$ as 
\begin{align}
    \bp_{\pm, \ell}^{(j)} \coloneqq \sum_{\mbf{w}|\ell, n - 1} \bra{\pm_{y}\mbf{w}} \widetilde{\mc{U}}^{(j)}(\ketbra{\mbf{+}}{\mbf{+}}^{\otimes n}) \ket{\pm_y, \mbf{w}} \mathrm{.}
    \label{eq:xtypebp}
\end{align}
Here, the first qubit is measured in the $y$-basis, and the outcome of this measurement corresponds to the $\pm$ index on this probability.
Let $\mbf{P}^{(j)}_s$ be the vector with components $\bp^{(j)}_{s, \ell}$ for $s \in\{+, -, 0\}$

The Born-rule probability distributions are related to the circuit eigenvalues of a FLO-twirled quantum circuit.
Let $\Lambda^{(j)}_{k}$ denote the degree-$k$ eigenvalue of circuit $j$ in either ensemble $\mc{C}_x^{\mathds{F}(2n)}$ or $\mc{C}_z^{\mathds{F}(2n)}$, for $k \in [2n]$.
We collect these eigenvalues into vectors as 
\begin{align}
\boldsymbol{\Lambda}_{\mathrm{even}}^{(j)} &\coloneqq \left(\Lambda_{2k}^{(j)}\right)_{k = 0}^n \\
\boldsymbol{\Lambda}_{\mathrm{even} \setminus \{2n\}}^{(j)} &\coloneqq \left(\Lambda_{2k}^{(j)}\right)_{k = 0}^{n - 1} \\
\boldsymbol{\Lambda}_{\mathrm{odd}}^{(j)} &\coloneqq \left(\Lambda_{2k + 1}^{(j)}\right)_{k = 0}^{n - 1} ,
\end{align} 
and we have

\begin{lemma}[]
For an ensemble of noisy and FLO-twirled FLO circuits $\mc{C}^{\mathds{F}(2n)}_r$,
\begin{align}
    \mbf{P}^{(j)}_{0} = 2^{-n} \mbf{d}^{(n)} \cdot \mbf{M}^{(n)} \cdot \boldsymbol{\Lambda}_{\mathrm{even}}^{(j)}
    \label{eq:ztypeinv}
\end{align}
for $r = ``z"$. 
Similarly,
\begin{align}
    \frac{1}{2}(\mbf{P}^{(j)}_{+} + \mbf{P}^{(j)}_{-}) &= 2^{-n} \mbf{d}^{(n - 1)} \cdot \mbf{M}^{(n - 1)} \cdot \bs{\Lambda}_{\mathrm{even} \setminus \{2n\}}^{(j)}
    \label{eq:plustypeinv}\\
    \frac{1}{2}(\mbf{P}^{(j)}_+ - \mbf{P}^{(j)}_-) &= 2^{-n} \mbf{d}^{(n - 1)} \cdot \mbf{M}^{(n - 1)} \cdot \bs{\Lambda}_{\mathrm{odd}}^{(j)}
    \label{eq:minustypeinv}
\end{align}
for $r = ``x"$.
\label{lem:bptolambdatransform}
\end{lemma}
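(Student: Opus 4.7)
The plan is to work in the Heisenberg picture and expand the relevant measurement projectors in terms of Majorana monomials, on which the net FLO-twirled circuit noise $\mce$ acts diagonally with eigenvalues $\Lambda^{(j)}_{k}$. The key observation is that, using $Z_j = -i\mm{2j-1}\mm{2j}$, the Hamming-weight projector on $n$ qubits admits the generating-function expansion
\begin{align*}
\Pi_\ell = 2^{-n}\,\mathrm{coeff}_\ell\!\Bigl(\sum_{S \subseteq [n]} (1+u)^{n-|S|}(1-u)^{|S|}(-i)^{|S|}\mm{\bsalph(S)}\Bigr),
\end{align*}
where $\bsalph(S) = \bigcup_{j \in S}\{2j-1,2j\}$, so that $\mm{\bsalph(S)}$ is a mutually-commuting ``paired'' Majorana monomial of degree $2|S|$, and $\mathrm{coeff}_\ell$ extracts the coefficient of $u^\ell$.

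For the $z$-type identity \cref{eq:ztypeinv}, I would rewrite $\bp^{(j)}_{0,\ell} = \bra{\mbf{0}}\mce^*(\Pi_\ell)\ket{\mbf{0}}$ (the ideal unitary is the identity when $r = ``z"$), apply $\mce^*(\mm{\bsalph(S)}) = \Lambda^{(j)}_{2|S|}\mm{\bsalph(S)}$, and evaluate the vacuum expectation $\bra{\mbf{0}}\mm{\bsalph(S)}\ket{\mbf{0}} = i^{|S|}$ (since $\mm{\bsalph(S)} = i^{|S|}\prod_{j \in S} Z_j$). The phase factors $(-i)^{|S|} \cdot i^{|S|}$ cancel, and grouping terms by $k = |S|$ and using the polynomial form \cref{eq:polyexpansion} of the Kravchuk matrix yields
\begin{align*}
\bp^{(j)}_{0,\ell} = 2^{-n}\sum_{k=0}^{n}\binom{n}{k}\,M^{(n)}_{k\ell}\,\Lambda^{(j)}_{2k}.
\end{align*}
Finally, the Kravchuk symmetry $\binom{n}{k} M^{(n)}_{k\ell} = \binom{n}{\ell} M^{(n)}_{\ell k}$, which is manifest from rewriting \cref{eq:binomexpansion} as $\binom{n}{k} M^{(n)}_{k\ell} = \sum_m n!(-1)^m / [m!(k-m)!(\ell-m)!(n-k-\ell+m)!]$, converts this to the claimed $2^{-n}[\mbf{d}^{(n)} \cdot \mbf{M}^{(n)} \cdot \bs{\Lambda}^{(j)}_{\mathrm{even}}]_\ell$.

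For the $x$-type identities, I would first verify directly from $K \coloneqq (I + i\mm{1})/\sqrt{2}$ that $U_+ \ket{\mbf{+}}^{\otimes n} = K\ket{\mbf{0}} = \ket{+_y}_1 \otimes \ket{\mbf{0}}^{\otimes (n-1)}$ up to a global phase. Since the twirled noise commutes with every FLO unitary and hence with $\mcu_+$, the probability reduces to $\bp^{(j)}_{\pm,\ell} = \bra{+_y,\mbf{0}}\mce^*(\Pi^{\pm}_\ell)\ket{+_y,\mbf{0}}$ with $\Pi^{\pm}_\ell = \tfrac{1}{2}(I \pm \mm{2})\,\Pi^{(n-1)}_\ell$, where $\Pi^{(n-1)}_\ell$ is the Hamming-weight projector acting on qubits $2, \ldots, n$ only. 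Expanding $\Pi^{(n-1)}_\ell$ as above (with $n \to n-1$ and $S \subseteq \{2, \ldots, n\}$) produces two families of monomials: the even-degree $\mm{\bsalph(S)}$ with eigenvalue $\Lambda^{(j)}_{2|S|}$, and the odd-degree $\mm{2}\mm{\bsalph(S)} = \mm{\{2\} \cup \bsalph(S)}$ with eigenvalue $\Lambda^{(j)}_{2|S|+1}$. Both families contribute the same $i^{|S|}$ phase against $\ket{+_y,\mbf{0}}$ (using $\bra{+_y} Y_1 \ket{+_y} = +1$), so forming $\tfrac{1}{2}(\bp^{(j)}_{+,\ell} \pm \bp^{(j)}_{-,\ell})$ isolates the even- and odd-degree eigenvalues cleanly. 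A final application of the Kravchuk symmetry with $n-1$ in place of $n$ yields \cref{eq:plustypeinv,eq:minustypeinv}.

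The main obstacle is bookkeeping the interplay between the $(-i)^{|S|}$ and $i^{|S|}$ phases that arise when translating between the Pauli $Z_j$ form of the computational-basis projectors and the Majorana monomial form on which $\mce$ acts diagonally; once those cancel, what remains is recognition of Kravchuk matrix entries via the polynomial identity and a routine binomial symmetry to match the row-versus-column ordering in the stated matrix product.
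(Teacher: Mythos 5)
Your proof is correct, and it takes a genuinely different (dual) route from the paper's. The paper works in the Schr\"odinger picture: it expands the initial density operator as $\ketbra{\mbf{0}}{\mbf{0}} = 2^{-n}\sum_{\mbf{x}} Z^{\mbf{x}}$, evolves each $Z^{\mbf{x}}$ forward through the noisy circuit (picking up the factor $\Lambda_{2|\mbf{x}|}^{(j)}$), takes the diagonal matrix element $\bra{\mbf{w}}Z^{\mbf{x}}\ket{\mbf{w}} = (-1)^{\mbf{x}\cdot\mbf{w}}$, and then sums over $\mbf{w}$ of fixed weight $\ell$ at the very end, so the binomial factor $\binom{n}{\ell}$ arises directly as the number of such $\mbf{w}$. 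You instead work in the Heisenberg picture, expanding the Hamming-weight \emph{projector} $\Pi_\ell = 2^{-n}\sum_{S\subseteq[n]}M^{(n)}_{|S|\ell}Z^S$ in terms of Majorana monomials and pulling it back through the self-adjoint twirled noise. Since your sum is organized by the weight $k=|S|$ of the Pauli term rather than the weight $\ell$ of the outcome, you get $\binom{n}{k}$ rather than $\binom{n}{\ell}$, and therefore need the Kravchuk symmetry $\binom{n}{k}M^{(n)}_{k\ell}=\binom{n}{\ell}M^{(n)}_{\ell k}$ to put the result in the stated form $\mbf{d}\cdot\mbf{M}\cdot\bs{\Lambda}$; this extra step is not present in the paper's proof, but your derivation of it from \cref{eq:binomexpansion} (which makes the symmetry manifest as a product of four factorials symmetric in $k\leftrightarrow\ell$) is clean and correct. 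The $x$-type argument is organized the same way and is also right: you correctly identify $U_+\ket{\mbf{+}}^{\otimes n}=\ket{+_y}\otimes\ket{\mbf{0}}^{\otimes(n-1)}$, factor $\Pi^\pm_\ell = \tfrac12(I\pm\gamma_2)\Pi^{(n-1)}_\ell$, use commutation of the FLO-twirled noise with $\mcu_+$ to reduce to the vacuum picture, and track that $\gamma_2$ raises the Majorana degree by one (giving $\Lambda_{2|S|+1}$) while contributing $\bra{+_y}Y_1\ket{+_y}=1$. The two proofs carry the same arithmetic content; yours buys a slightly cleaner organizational principle (work directly with the measured POVM elements and the noise adjoint) at the cost of one additional Kravchuk identity.
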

We collect these transformations into a global one over all experiments as
\begin{align}
    \bs{\Lambda} = \mbf{V} \mbf{P}
\end{align}
where 
\begin{align}
    \mbf{V} = \mbf{V}_{z} \oplus \mbf{V}_x 
\end{align}
and
\begin{align}
\begin{cases}
        \mbf{V}_z = \bigoplus_{j \in \mc{C}_z} \left[\mbf{M}^{(n)} \cdot (\mbf{d}^{(n)})^{-1}\right] & \\
         \mbf{V}_x = \bigoplus_{j \in \mc{C}_x} \left\{ \left[\mbf{M}^{(n - 1)} \cdot (\mbf{d}^{(n - 1)})^{-1}\right] \oplus \left[\mbf{M}^{(n - 1)} \cdot (\mbf{d}^{(n - 1)})^{-1}\right] \right\} \cdot (\sqrt{2} \mbf{H} \otimes \mbf{I}_n) &
\end{cases}
\label{eq:vdef}
\end{align}
with $\mbf{H}$ is the $2 \times 2$ Hadamard transform.
Using these transformations, we reconstruct the global set of circuit eigenvalues and subsequently reconstruct the the gate eigenvalues.
Our protocol proceeds as follows: given a set of elementary gates $\mc{G}$ we would like to characterize, we implement an ensemble of these circuits composed of the corresponding FLO-twirled gates.
Each circuit in our ensemble has eigenvalue given by the design matrix
\begin{align}
    \Lambda_k^{(j)} = \prod_{g \in \mc{G}} \xi_{g, k}^{A_{j, g}}
    \label{eq:designdef}
\end{align}
where the design matrix element $A_{j, g}$ corresponds to the number of occurences of gate $g$ in circuit $j$. 
Physically, we sample the Born rule probabilities defined in \cref{eq:ztypebp} and \cref{eq:xtypebp} to obtain empirical estimates $\hat{\mbf{P}}^{(j)}$ of these distributions for each circuit $j$, and we apply the transformations in \cref{lem:bptolambdatransform} to produce estimators $\hat{\mbf{\Lambda}}^{(j)}$ for the associated circuit eigenvalues.
Finally, we fit our gate eigenvalue estimates to our circuit eigenvalue data using a \emph{log-linear model}.
That is, our estimator for each Majorana degree $k$ is the solution to a system of equations of the form $\mbf{Ax} = \mbf{b}$, where $\mbf{A}$ is the design matrix, $x_{g, k} = -\mathrm{log}(\xi_{g, k})$, and $b_{j, k} = -\mathrm{log}(\Lambda^{(j)}_k)$. More formally, we give the following definition

\begin{definition}[Estimator]
    For a FACES model $\mc{M} = (\mc{G}, \bs{\xi}, \mc{C})$, from which we obtain the design matrix $\mbf{A}$, the FACES estimator for $\hat{\bs{\xi}}$ is taken as
    $\hat{\xi}_{g, k} = \exp(-\hat{x}_{g, k})$ for each $g$, $k$, with
\begin{equation}
\hat{x}_{g, k} = \begin{cases}
\left(\mbf{A}^+ \hat{\mbf{b}}\right)_{g, k} & \left(\mbf{A}^+ \hat{\mbf{b}}\right)_{g, k} \geq 0 \\
0 & \mathrm{otherwise}
\end{cases}
\end{equation}
\label{def:estimator}
\end{definition}
That is, we take the estimator given by ordinary least-squares, projected onto the positive orthant for each Majorana degree $k$. 
One could of course consider other estimators, such as $\hat{\mbf{x}}'$ taken to be the element of the positive orthant that minimizes $\lVert \mbf{A}\hat{\mbf{x}}' - \hat{\mbf{b}} \rVert_2$, or even an estimator that maximizes a likelihood function subject to a positivity constraint. 
These are convex programs and would presumably be efficient to consider as practical estimators. 
However, our choice of estimator is convenient from the standpoint of proving sample-complexity guarantees. 
We give a pseudocode description of the relevant subroutines in \cref{sec:pseudocode}.

\subsection{Sample Complexity}
\label{sec:complexity}
In this section, we analyze the sample complexity required to bound the error on the estimator $\hat{\bs{\xi}}$ in \cref{def:estimator}.
We provide the following theorem:
\begin{theorem}
   Suppose we have both $\Lambda_k^{(j)} \geq \frac{1}{2}$ and $\hat{\Lambda}_{k}^{(j)} \geq \frac{1}{4}$ for all degree sectors $k$ and experiments $j$, then the FACES estimator satisfies
    \begin{align}
    \lVert \hat{\bs{\xi}} - \bs{\xi} \rVert_{\infty} \leq 4 \lVert \mbf{A}^+ \rVert_{\infty} \varepsilon
    \end{align}
    We thus require $S = O\left(\frac{n \lVert \mbf{A}^+ \rVert_{\infty}^2\log(J/\delta)}{\epsilon^2}\right)$ samples per experiment to guarantee error $\epsilon$ with probability $1 - \delta$.
    \label{thm:samplecomplexity}
\end{theorem}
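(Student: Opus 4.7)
The plan is to propagate the estimation error through four layers, invoking concentration only at the innermost one. The estimator is built as $\hat{\bs{\xi}} = \exp(-\hat{\mbf{x}})$ componentwise, with $\hat{\mbf{x}}$ obtained by applying $\mbf{A}^{+}$ to $\hat{\mbf{b}}$ and then projecting onto the nonnegative orthant as in \cref{def:estimator}, $\hat{\mbf{b}} = -\log\hat{\bs{\Lambda}}$, and $\hat{\bs{\Lambda}} = \mbf{V}\hat{\mbf{P}}$ via the Kravchuk transform of \cref{lem:bptolambdatransform} and \cref{eq:vdef}. I would peel off the three outer layers with one-line Lipschitz/operator-norm inequalities, then close the innermost layer with a Hoeffding bound.

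Since $\bs{\xi}\in(0,1]$ the true $\mbf{x}$ lies in the nonnegative orthant, so the projection in \cref{def:estimator} can only move $\hat{\mbf{x}}$ closer to $\mbf{x}$ in $\ell_{\infty}$, and together with the fact that $x\mapsto e^{-x}$ is $1$-Lipschitz on $[0,\infty)$, this yields $\lVert\hat{\bs{\xi}}-\bs{\xi}\rVert_{\infty}\leq\lVert\hat{\mbf{x}}-\mbf{x}\rVert_{\infty}$. In the realizable setting $\mbf{A}^{+}\mbf{b}=\mbf{x}$, so $\lVert\hat{\mbf{x}}-\mbf{x}\rVert_{\infty}\leq\lVert\mbf{A}^{+}\rVert_{\infty}\lVert\hat{\mbf{b}}-\mbf{b}\rVert_{\infty}$ by definition of the induced $\infty\to\infty$ norm. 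Finally, the hypotheses $\Lambda_{k}^{(j)}\geq 1/2$ and $\hat\Lambda_{k}^{(j)}\geq 1/4$ bound $|\log'|$ by $4$ on the relevant interval, yielding $|\hat{b}_{k}^{(j)}-b_{k}^{(j)}|\leq 4\,|\hat\Lambda_{k}^{(j)}-\Lambda_{k}^{(j)}|$ pointwise. Chaining these three inequalities gives the first inequality of the theorem.

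For the sample-complexity half, the crucial observation is that $\hat\Lambda_{k}^{(j)}=\frac{1}{S}\sum_{s=1}^{S}V_{k,w_{s}}^{(j)}$ is itself an i.i.d.\ empirical mean, where $w_{s}$ is the $s$-th shot's readout label (a Hamming weight in the $z$-type case, and a signed $(n-1)$-bit Hamming weight in the $x$-type case). The representation $M_{jk}^{(\ell)}=\sum_{\bsalph|k}(-1)^{|\bsalph\cap[j]|}$ from \cref{eq:binsum} gives $|M_{k\ell}^{(n)}|\leq\binom{n}{\ell}$, so the normalization by $\mbf{d}^{(n)}$ in \cref{eq:vdef} ensures every entry of $\mbf{V}$ lies in $[-1,1]$; the $\sqrt{2}\,\mbf{H}\otimes\mbf{I}_{n}$ block combined with the $\tfrac{1}{2}(\hat{\mbf{P}}_{+}\pm\hat{\mbf{P}}_{-})$ grouping preserves this range in the $x$-type case. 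Hoeffding's inequality then yields $\Pr[|\hat\Lambda_{k}^{(j)}-\Lambda_{k}^{(j)}|>t]\leq 2e^{-St^{2}/2}$ for each (circuit, degree)-pair; setting $t=\epsilon/(4\lVert\mbf{A}^{+}\rVert_{\infty})$ and union bounding over all such pairs delivers the stated sample complexity.

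The main obstacle is careful bookkeeping rather than a new idea. One must explicitly verify the positivity of the realizable $\mbf{x}$ so that the orthant projection is nonexpansive in $\ell_{\infty}$, and one must track the $x$-type and $z$-type readouts separately when checking the uniform $|V|\leq 1$ entrywise bound, paying attention to the Hadamard and sign-twist factors in the $x$-type block of $\mbf{V}$ from \cref{eq:vdef}. Once these are in place, the remainder is a standard Hoeffding-and-union-bound calculation.
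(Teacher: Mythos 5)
Your proof is correct, and the deterministic error-propagation chain is essentially the paper's proof of \cref{lem:samplecomp} restated: the paper bounds $|e^{-\hat x_i}-e^{-x_i}|$ and $|\log\hat\Lambda_i-\log\Lambda_i|$ via the mean value theorem, while you quote the equivalent Lipschitz constants directly ($e^{-x}$ is $1$-Lipschitz on $[0,\infty)$; $\log$ is $4$-Lipschitz on $[\tfrac14,\infty)$), together with the same observation that the realizable $\mbf{x}=\mbf{A}^+\mbf{b}$ lies in the nonnegative orthant so the coordinatewise projection in \cref{def:estimator} is nonexpansive. Where you genuinely diverge from the paper is in the probabilistic step. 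The paper bounds $\max_j\lVert\delta\mbf{P}^{(j)}\rVert_1\le\varepsilon'$, propagates it through $\lVert\mbf{V}\,\delta\mbf{P}\rVert_\infty \le \lVert\mbf{V}\rVert_{1\to\infty}\lVert\delta\mbf{P}\rVert_1$, and appeals to the $O(n/\varepsilon'^2)$-sample cost of learning a distribution on $O(n)$ Hamming-weight outcomes in $\ell_1$; this is the source of the explicit factor of $n$ in the stated bound. You instead observe that each $\hat\Lambda_k^{(j)}=\frac{1}{S}\sum_s V^{(j)}_{k,w_s}$ is a bounded i.i.d.\ sample mean (the entrywise bound $|V_{k\ell}|\le 1$ follows from \cref{eq:binsum} and the $\mbf{d}^{-1}$ normalization in \cref{eq:vdef}, in both circuit types) and apply Hoeffding to each such mean, union bounding over the $O(mn)$ circuit-degree pairs. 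This is a valid route and in fact yields a somewhat sharper estimate, $S = O\bigl(\lVert\mbf{A}^+\rVert_\infty^2\,\varepsilon^{-2}\log(mn/\delta)\bigr)$, trading the multiplicative factor of $n$ for an additive $\log n$ inside the logarithm; since this is dominated by the paper's stated $O$-bound, the theorem follows a fortiori. One small point of bookkeeping: your three-step chain stops at $4\lVert\mbf{A}^+\rVert_\infty\max_{j,k}|\hat\Lambda_k^{(j)}-\Lambda_k^{(j)}|$, whereas the paper's \cref{lem:samplecomp} states the bound in terms of $\max_j\lVert\delta\mbf{P}^{(j)}\rVert_1$; these are reconciled by the extra inequality $\max_k|\delta\Lambda_k^{(j)}|\le\lVert\delta\mbf{P}^{(j)}\rVert_1$ (again from $|V|\le 1$), though your concentration argument sidesteps the need for that step entirely.
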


We prove this in two parts. 
We first prove a lemma that bounds the sensitivity of the gate-noise eigenvalue estimates in terms of the sample estimates of the Born rule probabilities of each experiment. 
The main theorem on sample complexity then follows from well-known results about estimating probability distributions from samples in the 1-norm.

\begin{lemma}
    \label{lem:samplecomp}
    Suppose we have both $\Lambda_k^{(j)} \geq \frac{1}{2}$ and $\hat{\Lambda}_{k}^{(j)} \geq \frac{1}{4}$ for all degree sectors $k$ and experiments $j$, then for all $\varepsilon \in (0, 1/4]$ we have
    \begin{align}
    \lVert \hat{\bs{\xi}} - \bs{\xi} \rVert_{\infty} \leq 4 \lVert \mbf{A}^+ \rVert_{\infty} \varepsilon 
    \end{align}
    where 
    \begin{align}
        \mathrm{max}_j \lVert \delta \mbf{P}^{(j)}\rVert_1  \le \varepsilon
    \end{align}
    is a uniform upper bound on the 1-norm error in the estimate of the Born rule probabilities.
\end{lemma}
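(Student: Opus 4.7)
The plan is to propagate the error bound $\max_j \lVert\delta\mbf{P}^{(j)}\rVert_1\leq\varepsilon$ through the chain of transformations that defines the FACES estimator, namely $\mbf{P}\mapsto\bs{\Lambda}\mapsto\mbf{b}\mapsto\mbf{x}\mapsto\bs{\xi}$, controlling each step either by a matrix norm bound or by an elementary Lipschitz argument. Since everything is diagonal in Majorana degree $k$, I will work at fixed $k$ throughout and only aggregate at the end.

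Step one is to bound $\lVert\delta\bs{\Lambda}^{(j)}\rVert_\infty$ in terms of $\lVert\delta\mbf{P}^{(j)}\rVert_1$. Inverting the relations in \cref{lem:bptolambdatransform} using the involution property $(\mbf{M}^{(\ell)})^2=2^\ell\mbf{I}$ from \cref{lem:mcoeffs}, the map from Born-rule probabilities to circuit eigenvalues takes the form $\mbf{M}^{(\ell)}(\mbf{d}^{(\ell)})^{-1}$ (with a Hadamard combination of the $\pm$-outcomes in the $x$-type case). The key observation is that each entry of this matrix is bounded by $1$ in absolute value: the binomial-sum expression $M^{(\ell)}_{jk}=\sum_{\bsalph|k}(-1)^{|\bsalph\cap[j]|}$ from \cref{eq:binsum} displays the $(j,k)$ entry as a sum of $\binom{\ell}{k}$ signs, so dividing by $\binom{\ell}{k}$ yields an average of $\pm 1$'s. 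Acting on a vector therefore produces $|[\mbf{M}^{(\ell)}(\mbf{d}^{(\ell)})^{-1}\mbf{v}]_j|\leq\lVert\mbf{v}\rVert_1$, giving $\lVert\delta\bs{\Lambda}^{(j)}\rVert_\infty\leq\lVert\delta\mbf{P}^{(j)}\rVert_1\leq\varepsilon$. In the $x$-type case one additionally uses $|\delta P_{+,\ell}\pm\delta P_{-,\ell}|\leq|\delta P_{+,\ell}|+|\delta P_{-,\ell}|$ to absorb the sign combinations back into the full 1-norm over both outcomes.

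Step two is to bound $\lVert\delta\mbf{b}\rVert_\infty=\max_{j,k}|\log\Lambda^{(j)}_k-\log\hat{\Lambda}^{(j)}_k|$. A mean value theorem argument gives $|\log\Lambda-\log\hat{\Lambda}|\leq|\Lambda-\hat{\Lambda}|/\min(\Lambda,\hat{\Lambda})$, and the hypotheses $\Lambda^{(j)}_k\geq 1/2$ and $\hat{\Lambda}^{(j)}_k\geq 1/4$ supply the required lower bound $1/4$, yielding $\lVert\delta\mbf{b}\rVert_\infty\leq 4\varepsilon$. Step three is the linear inversion and the conversion back to $\bs{\xi}$. Before the positivity projection, both $\mbf{A}\mbf{x}=\mbf{b}$ and $\mbf{A}\hat{\mbf{x}}=\hat{\mbf{b}}$ are pseudoinverse fits against the same design matrix, so $\hat{\mbf{x}}-\mbf{x}=\mbf{A}^+(\hat{\mbf{b}}-\mbf{b})$ and the induced $\infty\to\infty$ norm gives $\lVert\hat{\mbf{x}}-\mbf{x}\rVert_\infty\leq\lVert\mbf{A}^+\rVert_\infty\lVert\hat{\mbf{b}}-\mbf{b}\rVert_\infty\leq 4\lVert\mbf{A}^+\rVert_\infty\varepsilon$. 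The projection onto the positive orthant in \cref{def:estimator} can only decrease the error entrywise, because the true $x_{g,k}=-\log\xi_{g,k}\geq 0$: if a raw coordinate $y_{g,k}=(\mbf{A}^+\hat{\mbf{b}})_{g,k}$ is negative, then $|0-x_{g,k}|=x_{g,k}\leq|y_{g,k}-x_{g,k}|$. Finally, $\xi=e^{-x}$ is $1$-Lipschitz on $[0,\infty)$, so $|\hat{\xi}_{g,k}-\xi_{g,k}|\leq|\hat{x}_{g,k}-x_{g,k}|$ and the stated bound transfers to $\bs{\xi}$.

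The main obstacle I anticipate is the bookkeeping in Step one: verifying that the entrywise bound $|M^{(\ell)}_{jk}/\binom{\ell}{k}|\leq 1$ is tight enough to avoid accumulating factors of $n$ through the dimension of $\bs{\Lambda}^{(j)}$, and handling the $x$-type probabilities (which carry both an outcome sign and a Hadamard combination in the definition of $\mbf{V}$ in \cref{eq:vdef}) in a way that cleanly preserves the 1-norm on $\delta\mbf{P}^{(j)}$ rather than introducing a larger norm. Once those entrywise bounds are in place, the remaining steps are a short chain of standard inequalities, and the sample complexity claim follows by invoking the well-known $O(n\log(1/\delta)/\varepsilon^2)$ sample bound for estimating a distribution on $O(n)$ outcomes in 1-norm (together with a union bound over experiments).
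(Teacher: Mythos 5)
Your proof is correct and follows essentially the same route as the paper's: propagating $\lVert\delta\mbf{P}^{(j)}\rVert_1\le\varepsilon$ through $\mbf{P}\mapsto\bs{\Lambda}\mapsto\mbf{b}\mapsto\mbf{x}\mapsto\bs{\xi}$ via the entrywise bound $|M^{(\ell)}_{jk}/\binom{\ell}{k}|\le 1$ (which is exactly the paper's observation that the max absolute element of $\mbf{V}$ is $1$, applied through $\lVert\cdot\rVert_{1\to\infty}$), the mean-value bound $|\log\Lambda-\log\hat\Lambda|\le|\delta\Lambda|/\min(\Lambda,\hat\Lambda)\le 4|\delta\Lambda|$, submultiplicativity of $\lVert\mbf{A}^+\rVert_\infty$, the observation that projecting onto the positive orthant can only shrink the coordinatewise error, and the $1$-Lipschitz property of $e^{-x}$ on $[0,\infty)$ (the paper phrases this last step as another mean-value argument, but it is the same bound). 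The concern you raise about dimension-dependent factors in Step one is unfounded for exactly the reason you identify: the $1\to\infty$ operator norm of the entrywise-bounded map is $1$, independent of $n$, and in the $x$-type case the Hadamard combination only redistributes signs across the concatenated $(\pm)$ probability vector without inflating its $1$-norm.
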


Here the matrix norm $\|\cdot\|_\infty$ is defined in terms of the vector $\infty$-norm as 
\begin{align}
    \|\mbf{A}\|_\infty = \max_{x\neq 0} \frac{\|\mbf{Ax}\|_\infty}{\|\mbf{x}  \|_\infty}
\end{align}
and can be computed as the maximum value of the vector $1$-norm of each row. 

Our proof relies on several applications of the mean value theorem, which we recall is stated as:
\begin{lemma}[Mean-Value Theorem]
\label{thm:mvthm}
Let $f: [a, b] \rightarrow \mathds{R}$ be a continuous function on the closed interval $[a, b]$ and differentiable on the open interval $(a, b)$ for $a < b$.
There exists some $c$ in $(a, b)$ such that
\begin{align}
    f'(c) = \frac{f(b) - f(a)}{b - a}\,.
\end{align}
\end{lemma}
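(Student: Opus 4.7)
The plan is to reduce the Mean Value Theorem to Rolle's theorem by a standard secant-line subtraction. Specifically, I would introduce the auxiliary function
\begin{align}
g(x) \coloneqq f(x) - \frac{f(b) - f(a)}{b - a}(x - a)\,.
\end{align}
Since the affine term is continuous on $[a,b]$ and differentiable on $(a,b)$, $g$ inherits the same regularity as $f$ on the respective intervals. A direct evaluation gives $g(a) = f(a)$ and $g(b) = f(b) - (f(b)-f(a)) = f(a)$, so $g(a) = g(b)$.

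With the hypotheses of Rolle's theorem in hand, I would invoke it to produce a point $c \in (a,b)$ with $g'(c) = 0$. Differentiating the definition of $g$ termwise gives $g'(x) = f'(x) - \frac{f(b) - f(a)}{b - a}$ on $(a,b)$, so $g'(c) = 0$ rearranges to $f'(c) = \frac{f(b) - f(a)}{b - a}$, which is the desired conclusion.

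The substantive content therefore sits inside Rolle's theorem, which I would prove from the extreme value theorem together with Fermat's interior-extremum lemma. The extreme value theorem (a continuous function on a compact interval attains its supremum and infimum) supplies points $x_{\max}, x_{\min} \in [a,b]$ realizing those extrema for $g$. If both extrema occur at the endpoints, then $g(x_{\max}) = g(x_{\min})$ because $g(a) = g(b)$, forcing $g$ to be constant and hence $g' \equiv 0$ on $(a,b)$; in this case any $c \in (a,b)$ works. Otherwise, at least one extremum is attained at some interior $c \in (a,b)$, and Fermat's lemma—one-sided difference quotients at an interior extremum have opposite signs, so the two-sided limit, which exists by differentiability, must be zero—yields $g'(c) = 0$.

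The only real obstacle is that the theorem's hypothesis asks for differentiability only on the \emph{open} interval $(a,b)$, not at the endpoints; this is important because the function $(x-a)^{1/2}$ on $[0,1]$ is a legitimate instance of the theorem, for example. The argument above respects this asymmetry automatically: the extreme value theorem only uses continuity on the closed interval, and Fermat's lemma is only ever invoked at an \emph{interior} point, where differentiability is available by assumption. Beyond this book-keeping point, everything is standard undergraduate real analysis and no further subtleties arise.
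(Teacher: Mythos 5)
Your proof is correct and complete. The paper itself offers no proof of this lemma---it simply recalls the Mean Value Theorem as a standard fact before using it in the proof of Lemma~\ref{lem:samplecomp}---so there is nothing to compare against on the paper's side. Your argument is the canonical one: the secant-line subtraction reduces the statement to Rolle's theorem, which you correctly ground in the extreme value theorem and Fermat's interior-extremum lemma, and you are right that differentiability is only ever needed at interior points, so the weaker hypothesis on $(a,b)$ suffices.
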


\begin{proof}{(\cref{lem:samplecomp})}
The logarithms of the circuit and gate eigenvalues satisfy a linear relationship
\begin{align}
    \mbf{A}\mbf{x} = \mbf{b}
\end{align}
where $x_i = -\mathrm{log}(\xi_i)$, and $b_i = -\mathrm{log}(\Lambda_i)$, and we have bundled the compound indices into single indices for simplicity.
Letting $\delta x_i  \equiv \hat{x}_i - x_i$, define the function
\begin{align}
u_i(t) = \mathrm{exp}\left(x_i + t \delta x_i \right)
\end{align}
for each $i$, such that $u_i(0) = \xi_i$ and $u_i(1) = \hat{\xi}_i$.
Applying \cref{thm:mvthm} to each such function separately,
\begin{align}
    \hat{\xi}_i - \xi_i &= u_i(1) - u_i(0) \\
    &= u'_i(c_i) \\
    \hat{\xi}_i - \xi_i &= \delta x_i \mathrm{exp}\left(x_i + c_i \delta x_i \right)
\end{align}
for some $c_i \in (0, 1)$
The exponential function is monotonic, so we are guaranteed that the absolute value of the error is less than or equal to the value of the right-hand side for $c_i = 0$ (if $\delta x_i < 0$) or $c_i = 1$ (if $\delta x_i > 0$).
Using the definition of the gate eigenvalue and the estimate thereof gives
\begin{align}
    |\hat{\xi}_i - \xi_i| &\leq |\delta x_i| \mathrm{max}(\hat{\xi}_i, \xi_i) \leq |\delta x_i|
\end{align}
where the second inequality follows since we are assuming the true gate eigenvalue and its estimate are upper-bounded by 1.
This gives the inequality in terms of vector norms
\begin{align}
    \lVert \hat{\bs{\xi}} - \bs{\xi} \rVert_{\infty} \leq \lVert \delta \mbf{x} \rVert_{\infty}.
    \label{eq:xixbound}
\end{align}
Next, let $\tilde{\mbf{x}} \coloneqq \mbf{A}^+ \hat{\mbf{b}}$ be the ordinary least-squares estimator.
We have
\begin{align}
    |\delta x_i| = \begin{cases}
        |\tilde{x}_i - x_i| & \tilde{x}_i > 0 \\
        |x_i| & \tilde{x}_i \leq 0  
    \end{cases}
\end{align}
for all $i$.
If $\tilde{x_i} \leq 0$, we have
\begin{align}
    |x_i| \leq |x_i + |\tilde{x}_i|| = |x_i - \tilde{x}_i| = |\tilde{x}_i - x_i|
\end{align}
Thus $|\delta x_i| \leq |\tilde{x}_i - x_i|$ for all $i$, and this leads to $\lVert \delta \mbf{x}\rVert_{\infty} \leq \lVert \tilde{\mbf{x}} - \mbf{x}\rVert_{\infty}$. 
By the assumptions that we are in the realizable setting and that $\mbf{A}$ is full-rank, we have $\mbf{x} = \mbf{A}^+ \mbf{b}$, and this gives
\begin{align}
\lVert \hat{\bs{\xi}} - \bs{\xi} \rVert_{\infty} \leq \lVert \mbf{A}^+ (\hat{\mbf{b}} - \mbf{b}) \rVert_{\infty} \leq \lVert \mbf{A}^+ \rVert_{\infty} \lVert \hat{\mbf{b}} - \mbf{b} \rVert_{\infty}
\label{eq:xibound}
\end{align}
where the second inequality follows from submultiplicativity of the norm.

We bound the error on $\mbf{b}$ similarly. 
Letting $\delta\Lambda_i = \hat{\Lambda}_i - \Lambda_i$, we define
\begin{align}
    v_i(t) = \mathrm{log}(\Lambda_i + t \delta \Lambda_i)
\end{align}
for each index $i$ such that $v_i(0) = b_i$ and $v_i(1) = \hat{b}_i$.
We have
\begin{align}
    \hat{b}_i - b_i &= v_i(1) - v_i(0) \\
    &= v'_i(\tilde{c}_i) \\
    \hat{b}_i - b_i &= \frac{\delta \Lambda_i}{\Lambda_i + \tilde{c}_i \delta \Lambda_i} 
\end{align}
for $\tilde{c}_i \in (0, 1)$.
The function on the right-hand side is monotonic in $\tilde{c}_i$, so its absolute value is maximized for $\tilde{c}_i = 0$ (if $\delta \Lambda_i > 0$) or $\tilde{c}_i = 1$ (if $\delta \Lambda_i < 0$).
This gives
\begin{align}
| \hat{b}_i - b_i | &\leq |\delta \Lambda_i | \mathrm{max} \left[\hat{\Lambda}_i^{-1}, \Lambda_i^{-1}\right] \\
| \hat{b}_i - b_i | &\leq 4|\delta \Lambda_i | 
\end{align}
by our assumption that both $\Lambda_i \geq \frac{1}{2}$ and $\hat{\Lambda}_i \geq \frac{1}{4}$ for all $i$.
Finally, we use this to provide a bound on $\lVert \hat{\mbf{b}} - \mbf{b} \rVert_{\infty}$ as 
\begin{align}
    \lVert \hat{\mbf{b}} - \mbf{b} \rVert_{\infty} &\leq 4 \lVert \delta \boldsymbol{\Lambda} \rVert_{\infty} \\  
    &= 4 \lVert \mbf{V} \delta \mbf{P} \rVert_{\infty} \\
    &\leq \mathrm{max}_j \left(4 \lVert \delta \mbf{P}^{(j)} \rVert_{1} \right)\\
    \lVert \hat{\mbf{b}} - \mbf{b} \rVert_{\infty} &\leq 4 \varepsilon \label{eq:bbound}.
\end{align}
From the second to the third line above, we replaced the maximum over indices $i$ in $\lVert \cdot \rVert_{\infty}$ with a maximization first over each experiment and then within each experiment. 
We then applied the norm inequality $\lVert \cdot \rVert_{\infty} \leq \lVert \cdot \rVert_{1 \rightarrow \infty} \lVert \cdot \rVert_{1}$ experiment by experiment and used the fact that the maximum absolute element of $\mbf{V}$ is 1 in each case from \cref{eq:vdef}.
Combining \cref{eq:bbound} with \cref{eq:xibound} gives the result.
\end{proof}

\section{Numerics}
\label{sec:numerics}

\begin{figure}
    \centering
    \includegraphics[width=\textwidth]{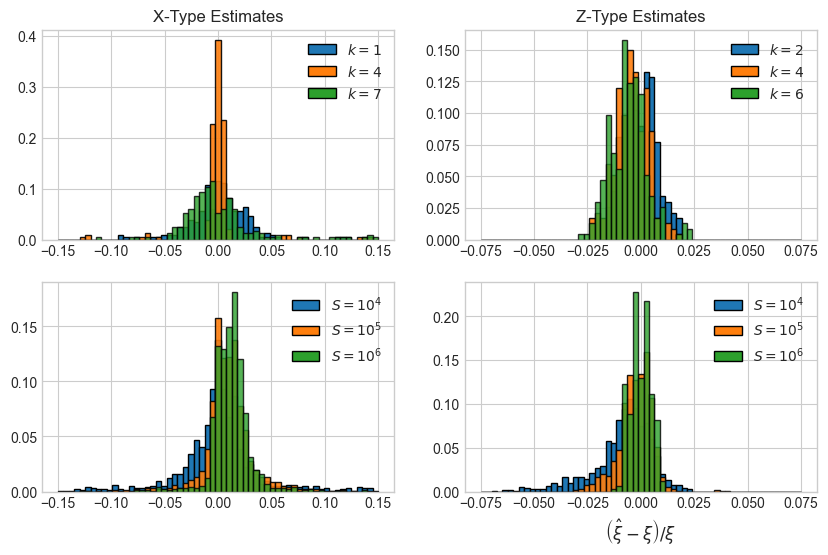}
    \caption{
    Histogram of relative errors for $2n = 10$ fermionic modes with $1000$ circuits for each experiment type. See main text for description.
    }
    \label{fig:hist}
\end{figure}

Finally, we demonstrate the efficacy of our protocol numerically in \cref{fig:hist}, where we plot a histogram of relative gate-eigenvalue errors for $n=5$ qubits ($2n = 10$ fermionic modes) and $m=1000$ circuits for each experiment type.
Our noise model is given by a random $2$-qubit Pauli channel following each gate we apply, where the Pauli channel probabilities are chosen uniformly from the interval $[10^{-2} - 10^{-3}, 10^{-2} + 10^{-3}]$.
Columns are partitioned according to $x$- and $z$-type estimates, and rows are partitioned according to whether the different histograms correspond to different Majorana weight-$k$ or different numbers $S$ of samples from the Born-rule probability distribution (we assume that we can apply FLO-twirled circuits exactly).
We also partition $Z$-rotation angles into 46 bins of width $\Delta \theta = \frac{\pi}{23}$.
We observe that the protocol is able to routinely achieve less than 5\% relative error for most of the eigenvalues estimated, and that this steadily improves with the number of shots.

\section{Conclusion}
\label{sec:conclusion}
We have shown how to learn the matchgate-averaged noise parameters on many matchgates simultaneously via FACES, our fermionic adaptation of the ACES protocol.
We envision that this will enable experiments to build a complete characterization of averaged noise parameters for universal gate sets, though several open problems remain.
Foremost is the technical challenge of performing matchgate twirling.
Unlike Pauli twirling, it seems unlikely that matchgate twirling can be achieved in constant depth.
However, there have recently been promising results to show that low-depth circuits can nevertheless approximate Haar random unitaries \cite{schuster2024random, laracuente2024approximate}.
If a corresponding result holds for Haar random matchgates, then this issue can be mitigated, if not entirely avoided.
Alternatively, we might consider only reproducing the Haar ensemble of FLO circuits up to the second moment, which is sufficient for exactly twirling the channel.
It is known that this can be achieved using only gates in the intersection of the matchgate and Clifford groups \cite{wan2023matchgate, heyraud2024unified}.
Finally, we might employ restricted circuit structure to address this challenge in a similar strategy to interleaved randomized benchmarking (IRB) \cite{magesan2012efficient}, which has historically enjoyed a great deal of  success in practice.
In a setting where the circuit structure is limited to the FLO analog of IRB, for example, we expect that FACES will exhibit a ``repeated convolution structure'' similar to that of matchgate benchmarking \cite{helsen2022matchgate}, since we are considering Haar averages over the group of FLO unitaries.

Another open question involves the notion of a \emph{gauge} freedom in the learned parameters.
Formally, such a freedom exists when the design matrix has a nontrivial kernel, whose elements correspond to unlearnable functions of gate eigenvalues.
Recent works~\cite{chen2023learnability,chen2024efficient} have shown that these functions correspond to the cut space of a particular graph in the Clifford setting.
It is thus germane to ask whether a corresponding characterization is possible in the FLO setting. 

It was also shown in~\cite{chen2024efficient} that, in the Clifford setting, one can construct design matrices that enable learning all learnable Pauli eigenvalues to \textit{relative} precision. 
We expect similar results to hold in the setting of FACES, but we leave this open. 

Ultimately, we would like to utilize these protocols to not only learn circuit-level noise, but to also mitigate it.
With this in mind, an important feature of our protocol is that the design matrix is necessarily block diagonal.
In the aforementioned question of gauge parameters, it is already clear that each unlearnable function thus depends only on gate eigenvalues from the same Majorana-degree sector. 
Additionally, noisy expectation values of fixed Majorana-degree observables are necessarily linear functions of the ideal expectation values, even in the setting where these observables are not Majorana monomials (e.g. measuring the expected energy for a Hamiltonian of fixed Majorana degree).
This contrasts with the setting of Pauli observables under Clifford-circuit evolution, and is reminiscent of training methods for error mitigation \cite{montanaro2021error, papic2025nearterm}.
We leave a careful consideration of this problem and other applications of our results to quantum simulation for future work.

\begin{acknowledgments} 
We are grateful to Ashley Montanaro and Ryan O'Donnell for informing us that $\mbf{M}^{(\ell)}$ was known in the literature as a Kravchuk matrix.
Matchgate circuit simulations were performed using the GitHub repository \cite{ClaesGithub}.
\end{acknowledgments}

\appendix
\section{Pseudocode Algorithms}
\label{sec:pseudocode}
In this section we provide pseudocode implementations for the steps of our protocol described in \cref{fig:controlflow}.
Algorithm \ref{alg:maces} is the main function, which learns the gate eigenvalues $\boldsymbol{\xi}$ describing an error model $\mc{M}$ for a collection of circuits $\mc{C}$ in a given number of measurements per circuit $S$.
First, the design matrix $\mbf{A}$ is computed using \cref{alg:design}.
The rows of this matrix are indexed by circuits in $\mc{C}$, and the columns are indexed by noise parameters, such that $A_{jk}$ corresponds to the number of times the noise parameter $k$ appears in circuit $j$.
In \cref{alg:SampleCircuit}, a probability distribution of outcomes is generated for each circuit according to its type (either ``$x$'' or ``$z$'').
These distributions are then input to \cref{alg:CircuitEigs}, where they are used to calculate the circuit eigenvalues by \cref{eq:ztypeinv} or eq.s (\ref{eq:plustypeinv}) and (\ref{eq:minustypeinv}).
Finally, this is used to produce the gate eigenvalues $\boldsymbol{\xi}$ using a similar linear inversion procedure to the ACES protocol.
In this procedure, $\mbf{A}$ is truncated to only include its elements which lie above some given cutoff $\delta$.
If $\mbf{A}$ is still full rank, we solve the linear equation $\mbf{A} \mbf{x} = \mbf{b}$ relating the logarithms of the circuit, $\mbf{b}$, and gate eigenvalues $\mbf{x}$. 
Finally, we recover the gate eigenvalues themselves by exponentiating the components of $\mbf{x}$. 

\begin{algorithm}[H]
    \renewcommand{\baselinestretch}{1}
    \caption{\label{alg:maces}
        $\textsc{FACES}(\mc{M},S)$\\
        Given a FACES model $\mc{M} \coloneqq (\mc{G}, \bs{\xi}, \mc{C})$ and a number of samples $S$ per circuit, output estimates $\hat{\boldsymbol{\xi}}$ of all the gate eigenvalues $\boldsymbol{\xi}$. 
        \smallskip
    }
    \begin{algorithmic}[1]
        \Require $\mc{M},S$
        \State \# Compute the design matrix $\mbf{A}$
        \State $\mbf{A} \gets \textsc{Design}(\mc{M})$
        \State $\boldsymbol{\Lambda} \gets [\ \ ]$ \Comment{Initialize an empty list}
        \State \# estimate all circuit eigenvalues
        \For{$\mc{U}^{(j)} \in \mc{C}$}
            \State $\boldsymbol{p}^{(j)} \gets \textsc{SampleCircuit}(\mc{U}^{(j)},S)$
            \State $\boldsymbol{\Lambda}^{(j)} \gets \textsc{CircuitEigs}(\boldsymbol{p}^{(j)})$
            \State $\boldsymbol{\Lambda} \gets \textsc{append}(\boldsymbol{\Lambda}, \boldsymbol{\Lambda}^{(j)})$
        \EndFor\\
        \# Estimate the gate eigenvalues
        \State $\hat{\boldsymbol{\xi}} \gets \textsc{GateEigs}({\boldsymbol{\Lambda}}, \mbf{A})$
        
        \Ensure $\hat{\boldsymbol{\xi}}$
    \end{algorithmic}
\end{algorithm}

\begin{algorithm}[H]
    \renewcommand{\baselinestretch}{1}
    \caption{\label{alg:design}
        $\textsc{Design}(\mc{M})$\\
        Compute the design matrix $\mbf{A}$ for the FACES model $\mc{M} \coloneqq (\mc{G}, \bs{\xi}, \mc{C})$.
        \smallskip
    }
    \begin{algorithmic}[1]
        \Require $\mc{M}$
        \State $J \gets |\mc{C}|$
        \State $K \gets |\mc{G}|$
        \State $\mbf{A} \gets \mbf{0}_{J \times K}$ \Comment{Initialize $\mbf{A}$ as an all-zeros matrix.} \\
        \# Compute the design matrix $\mbf{A}$
        \For{$\mcu^{(j)} \coloneqq \prod_{m = 1}^{L(j)} \mcu^{(j)}_m \in \mc{C}$}
        \For{$m \in [L(j)]$} 
        \For{$\ell \in [K]$}
        \If{$\mcu^{(j)}_m = G_{\ell}$}
        \State $A_{j \ell}$++
        \EndIf
        \EndFor
        \EndFor 
        \EndFor
        \, \# Check if $A$ has full column-rank.
        \If{$\textsc{rank}(A) < K$}
            \State $\textsc{throw}$(``$\mbf{A}$ is not full rank.
            \State \hspace{3.5em} $\mc{M}$ is not identifiable from $\mc{C}$.'')
        \EndIf
        \Ensure $\mbf{A}$
    \end{algorithmic}
\end{algorithm}

\begin{algorithm}[H]
    \renewcommand{\baselinestretch}{1}
    \caption{\label{alg:SampleCircuit}
        $\textsc{SampleCircuit}(\mc{U}, S)$\\ 
        Draw $N$ iid samples from the twirled ensemble for circuit $\mc{U}$ and return an estimate $\boldsymbol{p}$ of the measurement outcome probability distribution. 
        \smallskip
    }
	\begin{algorithmic}[1]
		\Require $\mc{U}, S$ \\
        \# Set $\boldsymbol{p}$ to all zeros (for circuit type $x$ or $z$). 
        \State $\boldsymbol{p} \gets 0$
        \For{$n \in [N]$}
		  \State $\mc{U}' \gets \textsc{Twirl}(\mc{U})$ \Comment{twirled version of $\mc{U}$}
          \State \# Run experiment, record single outcome. 
          \State $w \gets \textsc{RunExperiment}(\mc{U}')$
          \State $\boldsymbol{p}[w]$++ \Comment{Increment count for outcome $w$}
		\EndFor
		\Ensure $\boldsymbol{p}/S$
	\end{algorithmic}
\end{algorithm}

\begin{algorithm}[H]
    \renewcommand{\baselinestretch}{1}
	\caption{\label{alg:CircuitEigs}
        $\textsc{CircuitEigs}(\boldsymbol{p})$\\ 
        Given a probability distribution $\boldsymbol{p}$, compute the associated circuit eigenvalues $\boldsymbol{\Lambda}$ depending on the type of $\boldsymbol{p}$.
        \smallskip
    }
	\begin{algorithmic}[1]
		\Require $\boldsymbol{p}$ \\
        \# $\mbf{M}$ is calculated using \cref{lem:mcoeffs}. 
		\If{\textsc{type}($\boldsymbol{p}$) = $z$} \\
        \# Apply \cref{eq:ztypeinv}
          \State $\Lambda_{2k} \gets  \sum_{\ell = 0}^n \binom{n}{\ell}^{-1} M^{(n)}_{k \ell} p_{\ell}$
        \Else \Comment{else \textsc{type}($\boldsymbol{p}$) = $x$} \\
        \# Apply \cref{eq:plustypeinv} and \cref{eq:minustypeinv}
		  \State $\Lambda_{2k} \gets  \sum_{\ell = 0}^{n-1} \binom{n - 1}{\ell}^{-1} M^{(n - 1)}_{k \ell} (p_{+, \ell} + p_{-, \ell})$
          \State $\Lambda_{2k + 1} \gets  \sum_{\ell = 0}^{n-1} \binom{n - 1}{\ell}^{-1} M^{(n - 1)}_{k \ell} (p_{+, \ell} - p_{-, \ell})$
		\EndIf
		\Ensure $\boldsymbol{\Lambda}$
	\end{algorithmic}
\end{algorithm}

\begin{algorithm}[H]
\caption{$\textsc{Twirl}(\mc{U} \coloneqq \prod_{m = 1}^L \mc{U}_{m})$}\label{alg:mtwirl}
\begin{algorithmic}[1]
\Require{$\mc{U}$}
\State $\mc{U}' \gets \mc{I}$
\For{$k \in [L]$}
\State $\mcv \gets \textsc{sample}(\ds{F}(2n))$
\State $\mc{U}' \gets \left(\mc{U}_{k} \circ \mcv \circ \mc{U}_k^{\dagger}\right) \circ \left(\mc{U}_{k} \circ \mcv^{\dagger}\right) \circ \mc{U}'$

\EndFor
\Ensure $\mc{U}'$
\end{algorithmic}
\end{algorithm}

\begin{algorithm}[H]
    \renewcommand{\baselinestretch}{1}
	\caption{\label{alg:GateEigs}
        $\textsc{GateEigs}\bigl(\mbf{A}, \hat{\boldsymbol{\Lambda}}, \delta \bigr)$\\ 
        From a design matrix $\mbf{A}$, a collection of circuit eigenvalue estimates $\hat{\boldsymbol{\Lambda}}$, and a cutoff parameter $\delta$, return estimates $\hat{\boldsymbol{\xi}}$ of the gate eigenvalues $\boldsymbol{\xi}$.
        \smallskip
    }
	\begin{algorithmic}[1]
		\Require $\mbf{A}, \hat{\boldsymbol{\Lambda}}, \delta$
        \State $J, K \gets \textsc{size}(\mbf{A})$
        \State \# Delete $\hat{\bs{\Lambda}}$ and associated row of $\mbf{A}$ if below the cutoff.
        \State $\hat{\boldsymbol{\Lambda}} \gets \hat{\boldsymbol{\Lambda}}[\hat{\boldsymbol{\Lambda}} > \delta]$ 
        \State $\mbf{A} \gets \mbf{A}[\hat{\boldsymbol{\Lambda}} > \delta, \,:\,]$ 
        \State \# Validate that $\mbf{A}$ is still full column rank.
        \If{$\textsc{rank}(A) < K$}
            \State $\textsc{throw}$(``$\mbf{A}$ is not full rank with cutoff $\delta$.'')
        \EndIf
        
        \State $\hat{\mbf{b}} \gets -\log(\hat{\boldsymbol{\Lambda}})$
        \State $\hat{\mbf{x}} \gets \textsc{solve}(\mbf{A},\hat{\mbf{b}})$ \Comment{Solve $\mbf{A} \hat{\mbf{x}}=\hat{\mbf{b}}$}
        \State $\hat{\boldsymbol{\xi}} \gets \exp(-\hat{\mbf{x}})$ 
		\Ensure $\hat{\boldsymbol{\xi}}$
	\end{algorithmic}
\end{algorithm}

\section{Proofs}
\label{sec:proofs}

\subsection{Proof of \texorpdfstring{\cref{lem:mcoeffs}}{Lemma 1}} 
\label{sec:erelations}

\begin{replemma}{lem:mcoeffs}[Restatement]
We have the alternate expressions for $M^{(\ell)}_{jk}$
\begin{align}
    M_{jk}^{(\ell)} &= \mathrm{coeff}_k\bigl((1 - u)^j (1 + u)^{\ell - j} \bigr)  \label{eq:polyexpansionrep} \\
    &= \sum_{\bsalph|k, \ell} (-1)^{|\bsalph \cap [j]|} \label{eq:binsumrep} \\
    M_{jk}^{(\ell)} &= \sum_{m = \max{(0, k + j - \ell)}}^{\min{(j, k)}} \binom{\ell - j}{k - m} \binom{j}{m} (-1)^m\,.
    \label{eq:binomexpansionrep}
\end{align}
Additionally, $\mathbf{M}^{(\ell)}$ is proportional to an involution
\begin{align}
    (\mathbf{M}^{(\ell)})^2 = 2^{\ell} \mathbf{I} \mathrm{.}
    \label{eq:involutionrep}
\end{align}
\end{replemma}

\begin{proof}
    \cref{eq:polyexpansionrep} immediately follows from the definition \cref{eq:mdef} and Vieta's formulas.

    Now consider \cref{eq:binsumrep}, and let $\boldsymbol{b} \coloneqq -\mbf{1}_j \oplus \mbf{1}_{\ell - j}$ from the definition in \cref{eq:mdef}. 
    We have
    \begin{align}
    e_k(-\mbf{1}_j \oplus \mbf{1}_{\ell - j})
        &= \sum_{\bsalph|k, \ell} \prod_{g \in \bsalph} b_{g}    \\
        &= \sum_{\bsalph|k, \ell} \left(\prod_{g \in \bsalph \cap [j]} b_{g}\right)  \left(\prod_{g \in \bsalph \setminus [j]} b_{g}\right) \\
    e_k(-\mbf{1}_j \oplus \mbf{1}_{\ell - j}) &= \sum_{\bsalph|k, \ell} (-1)^{|\bsalph \cap [j]|} \mathrm{,}
    \end{align}
    yielding the result \cref{eq:binsumrep}. 

    To show \cref{eq:binomexpansionrep}, we expand the polynomial in the right side of \cref{eq:polyexpansionrep} by the binomial theorem
    \begin{align}
        (1 + u)^{\ell - j} (1 - u)^j &=\left[\sum_{m = 0}^{\ell - j}\binom{\ell - j}{m} u^m \right]\left[\sum_{m = 0}^{j}\binom{j}{m} (-u)^m \right] \\
        (1 + u)^{\ell - j} (1 - u)^j &= \sum_{k = 0}^{\ell} \left[\sum_{m = \mathrm{max}(0, k + j - \ell)}^{\mathrm{min}(j, k)} \binom{\ell - j}{k - m} \binom{j}{m} (-1)^m \right] u^k
        \label{eq:polyexpandwu}
    \end{align}
    The limits of the sum are such that the binomial coefficients are well-defined (this is redundant since our convention is such that the binomial coefficients are zero if the arguments are outside of the bounds).
    The lower limit is such that $m \geq 0$ to satisfy the lower bound of $\binom{j}{m}$ and $k - m \leq \ell - j$ to satisfy the upper bound of $\binom{\ell - j}{k - m}$.
    The upper bound is such that $m \leq j$ to satisfy the upper bound of $\binom{j}{m}$ and $m \leq k$ to satisfy the lower bound of $\binom{\ell - j}{k - m}$. 
    \Cref{eq:polyexpandwu}, together with \cref{eq:polyexpansionrep}, gives the result.

    To show \cref{eq:involutionrep}, let $v_j(u) \equiv u^j$ for $j \in \{0 \} \cup [\ell]$.
    We have, again by \cref{eq:mdef} and Vieta's formulas,
    \begin{align}
    \left[\mbf{M}^{(\ell)} \cdot \boldsymbol{v}(u)\right]_j &= (1 + u)^{\ell - j}(1 - u)^{j} \label{eq:coeffline} \\
    \mbf{M}^{(\ell)} \cdot \boldsymbol{v}(u) &= (1 + u)^{\ell} \boldsymbol{v}\left(\frac{1 - u}{1 + u}\right)
    \label{eq:hadamardmobius}
    \end{align}
    This gives
    \begin{align}
        (\mbf{M}^{(\ell)})^2 \cdot \boldsymbol{v}(u) &= (1 + u)^{\ell} \mbf{M}^{(\ell)} \cdot \boldsymbol{v}\left(\frac{1 - u}{1 + u}\right) \\
        &= (1 + u)^{\ell} \left[1 + \left(\frac{1 - u}{1 + u}\right)\right]^{\ell} \boldsymbol{v}\left[\frac{1 - \left(\frac{1 - u}{1 + u}\right)}{1 + \left(\frac{1 - u}{1 + u}\right)}\right] \\
        &= \left[(1 + u) + (1 - u)\right]^{\ell} \boldsymbol{v}\left[\frac{(1 + u) - (1 - u)}{(1 + u) + (1 - u)}\right] \\
        (\mbf{M}^{(\ell)})^2 \cdot \boldsymbol{v}(u) &= 2^{\ell} \boldsymbol{v}(u).
    \end{align}
    The set of vectors $\{\boldsymbol{v}(u)\}_{u \in \mathds{C}}$ form a spanning set for $\mathds{C}^{\ell + 1}$, so we must therefore have
    \begin{align}
        (\mbf{M}^{(\ell)})^2 = 2^{\ell} \mbf{I} \mathrm{.}
    \end{align}
\end{proof}

\subsection{Proof of \texorpdfstring{\cref{lem:karelations}}{Lemma 6}}
\label{sec:kaproof}
    
\begin{replemma}{lem:karelations}[Restatement]
The Kravchuk matrix and antipode involution commute, and their product has components
\begin{align}
    (\mbf{s}^{(\ell)} \cdot \mbf{M}^{(\ell)})_{jk} = (\mbf{M}^{(\ell)} \cdot \mbf{s}^{(\ell)})_{jk} = (-1)^{jk} e_k(-\mbf{1}_j \oplus \mbf{1}_{\ell - j}) \mathrm{.}
    \label{eq:permmdefrep}
\end{align}
\end{replemma}
\begin{proof}
    We show that both matrix orderings have the components given by the expression on the right-hand side.
    First, note
    \begin{align}
        (\mbf{s}^{(\ell)} \cdot \mbf{M}^{(\ell)})_{jk} = \begin{cases}
            \mathrm{M}^{(\ell)}_{j, k} & j \ \mathrm{even} \\
            \mathrm{M}^{(\ell)}_{\ell - j, k} & j \ \mathrm{odd}
        \end{cases} .
    \end{align}
Since exchanging rows $j$ and $\ell - j$ in $\mbf{M}^{(\ell)}$ is equivalent to multiplying the argument of the symmetric polynomial in \cref{eq:mdef} by a factor of $-1$, we have
    \begin{align}
        (\mbf{s}^{(\ell)} \cdot \mbf{M}^{(\ell)})_{jk} &= e_k[(-1)^j(-\mbf{1}_j \oplus \mbf{1}_{\ell - j})] \\
        (\mbf{s}^{(\ell)} \cdot \mbf{M}^{(\ell)})_{jk} &= (-1)^{jk}e_k(-\mbf{1}_j \oplus \mbf{1}_{\ell - j}) \label{eq:leftorderline2}
    \end{align}
where the second line follows from the identity $e_k(-\boldsymbol{v}) = (-1)^k e_k(\boldsymbol{v})$.
To show the result for the other ordering, we utilize the \emph{reciprocal polynomial} of $p(x)$, given by $p^*(x) = x^d p(1/x)$, where $d$ is the degree of $p$.
In particular, the degree-$k$ coefficient of $p$ is the degree-$(d-k)$ coefficient of $p^*$.
For any $k$, we have by \cref{lem:mcoeffs}
\begin{align}
\mathrm{M}^{(\ell)}_{j, \ell - k} &= \mathrm{coeff}_{\ell - k}\bigl((1 - u)^j (1 + u)^{\ell - j} \bigr) \\
&= \mathrm{coeff}_{k}\bigl(u^{\ell}\bigl(1 - \frac{1}{u}\bigr)^j \bigl(1 + \frac{1}{u} \bigr)^{\ell - j} \bigr) \\
&= \mathrm{coeff}_{k}\bigl((u - 1)^j (u + 1)^{\ell - j} \bigr) \\
\mathrm{M}^{(\ell)}_{j, \ell - k} &= (-1)^j \mathrm{M}^{(\ell)}_{j, k}
\end{align}
Since $\mbf{s}^{(\ell)}$ only exchanges columns $k$ and $\ell - k$ in $\mbf{M}^{(\ell)}$ if $k$ is odd, we have
\begin{align}
    (\mbf{M}^{(\ell)} \cdot \mbf{s}^{(\ell)})_{jk} = (-1)^{jk} \mathrm{M}_{jk} = (-1)^{jk}  e_k(-\mbf{1}_j \oplus \mbf{1}_{\ell - j}) \label{eq:rightorder}
\end{align}
Together, \cref{eq:leftorderline2} and \cref{eq:rightorder} complete the proof.
\end{proof}

\subsection{Proof of \texorpdfstring{\cref{lem:ptwirl}}{Lemma 7}}
\label{sec:twirlrelations}

\begin{replemma}{lem:ptwirl}[Restatement]
Consider a Pauli channel $\mce^{\ds{P}(n)}$ with Pauli error probabilities $p_{\mathbf{x}}$. 
Let $\mbx(\bsalph)$ denote the Pauli string associated to the subset of Majorana modes $\bsalph$ by the Jordan-Wigner transformation. 
Then $\mce^{\ds{F}(2n)} = (\mce^{\ds{P}(n)})^{\ds{F}(2n)}$, and the fermionic error probabilities $q_k$ of $\mce^{\ds{F}(2n)}$ are given by
\begin{align}    
q_k = \sum_{\bsalph | k} p_{\mbx(\bsalph)} \,. \label{eq:ptomgrenormrep}
\end{align}
\end{replemma}

\begin{proof}

The first statement follows from the fact that every Pauli operator is also a FLO unitary by the Jordan-Wigner transformation.
Namely, we have, for all $\bsalph \subseteq [2n]$ with $|\bsalph| = L$,
\begin{align}
    \mm{\bsalph} = \begin{cases}
    \prod_{j = 1}^{L/2} \left(\mm{\alpha_{2j-1}}\mm{\alpha_{2j}}\right) & L \ \mathrm{even} \\
    \mm{1} \left(\mm{1}\mm{\alpha_1}\right)\prod_{j = 1}^{(L - 1)/2} \left(\mm{\alpha_{2j}}\mm{\alpha_{2j+ 1}}\right) & L \ \mathrm{odd} \\
    \end{cases} \mathrm{.}
    \label{eq:mgrep}
\end{align}
Each factor in parentheses is a FLO transformation, since $\mm{j}\mm{k} = \exp\left(\frac{\pi}{2}\mm{j}\mm{k}\right)$, and $\mm1 \in \ds{F}(2n)$ by the definition below \cref{eq:car}.
Thus, we can perform the FLO twirl on $\mce$ by first performing a Pauli twirl, mapping $\mce \rightarrow \mce^{\ds{P}(n)}$, and then twirling over the full group of FLO unitary transformation, mapping $\mce^{\ds{P}(n)} \rightarrow (\mce^{\ds{P}(n)})^{\ds{F}(2n)} = \mce^{\ds{F}(2n)}$.

By applying the Jordan-Wigner transform to \cref{eq:ptwirl}, together with the first statement of the lemma, we have
\begin{align}
    \mce^{\ds{F}(2n)}(X) &=  \sum_{\bsalph \subseteq [2n]}p_{\mbf{x}(\bsalph)} \int_{\ds{F}(2n)} dU \;\mcu\left(\gamma_{\bsalph}\right) X \mcu(\gamma^{\dagger}_{\bsalph}) \\
    \mce^{\ds{F}(2n)}(X) &= \sum_{\substack{\bsalph \subseteq [2n] \\ \bsbet, \bset | |\bsalph|, 2n}}p_{\mbf{x}(\bsalph)} \left[\int_{O(2n)} d\mbf{R} \det\left(\mbf{R}_{\bsalph \bsbet}\right) \det\left(\mbf{R}_{\bsalph \bset}\right)\right] \left(\mm{\bsbet} X \mm{\bset}^{\dagger}\right)
\end{align}
Since $\mce^{\ds{F}(2n)}$ is invariant under Pauli twirling (again by the argument that every Pauli is a matchgate), we must have that only the terms for which $\bsbet = \bset$ are nonvanishing. 
This gives
\begin{align}
\mce^{\ds{F}(2n)}(X) &= \sum_{\substack{\bsalph \subseteq [2n] \\ \bsbet | |\bsalph|, 2n}}p_{\mbf{x}(\bsalph)} \left[\int_{O(2n)} d\mbf{R} \ |\det\left(\mbf{R}_{\bsalph \bsbet}\right)|^2\right] \left(\mm{\bsbet} \odot \mm{\bsbet}^{\dagger}\right)
\end{align}
To evaluate this integral, note that it is invariant upon making the change-of-variables $\mbf{R} \rightarrow \mbf{S}^{T}\mbf{R}$, where $\mbf{S}$ is a permutation matrix which maps $\bsalph$ to some subset $\bset$ with $|\bset| = |\bsalph|$, as
\begin{align}
    \int_{O(2n)} d\mbf{R} \ |\det\left(\mbf{R}_{\bsalph \bsbet}\right)|^2 &= \int_{O(2n)} d\mbf{R} \ |\det\left[\left(\mbf{S}^{\mathrm{T}} \mbf{R}\right)_{\bsalph \bsbet}\right]|^2\\
    \int_{O(2n)} d\mbf{R} \ |\det\left(\mbf{R}_{\bsalph \bsbet}\right)|^2 &= \int_{O(2n)} d\mbf{R} \ |\det\left(\mbf{R}_{\bset \bsbet}\right)|^2
\end{align}
In this step, we used the Cauchy-Binet formula and the fact that $\det\left[\left(\mbf{S}^{\mathrm{T}}\right)_{\bsalph \boldsymbol{\xi}}\right] = \pm \delta_{\boldsymbol{\xi} \bset}$.
Summing over all such $\bset$ gives
\begin{align}
    \int_{O(2n)} d\mbf{R} \ |\det\left(\mbf{R}_{\bsalph \bsbet}\right)|^2 
    &= \binom{2n}{|\bsalph|}^{-1} \sum_{\bset | |\bsalph|} \int_{O(2n)} d\mbf{R} \ |\det\left(\mbf{R}_{\bset \bsbet}\right)|^2 \\ 
    &= \binom{2n}{|\bsalph|}^{-1} \int_{O(2n)} d\mbf{R} \ \det\left[\left(\mbf{R}^{\mathrm{T}} \mbf{R}\right)_{\bsbet \bsbet}\right] \\
    \int_{O(2n)} d\mbf{R} \ |\det\left(\mbf{R}_{\bsalph \bsbet}\right)|^2 &= \binom{2n}{|\bsalph|}^{-1}
\end{align}
From the first to the second equality, we used the fact that $\det\left(\mbf{R}_{\bset \bsbet}\right) = \det\left(\mbf{R}^{\mathrm{T}}_{\bsbet \bset}\right)$ in one of the factors and used the Cauchy-Binet formula again.
Finally, we evaluated the determinant to 1 and used the normalization of the Haar integral.
This gives 
\begin{align}
    \mce^{\ds{F}(2n)} &= \sum_{\substack{\bsalph \subseteq [2n] \\ \bsbet| |\bsalph|}} \binom{2n}{|\bsalph|}^{-1} p_{\mbf{x}({\bsalph})} \left(\mm{\bsbet} \odot \mm{\bsbet}^{\dagger}\right) \\
    \mce^{\ds{F}(2n)} &= \sum_{k = 0}^{2n} \binom{2n}{k}^{-1} \left(\sum_{\bsalph | k} p_{\mbf{x}(\bsalph)} \right)
    \left( \sum_{\bsbet | k} \mm{\bsbet} \odot \mm{\bsbet}^{\dagger} \right)
\end{align}
Identifying $q_k = \sum_{\bsalph | k} p_{\mbf{x}(\bsalph)}$ gives the second statement of the lemma.
\end{proof}

\subsection{Proof of \texorpdfstring{\cref{lem:bptolambdatransform}}{Lemma 11}}

\begin{replemma}{lem:bptolambdatransform}[Restatement]
For an ensemble of noisy and FLO-twirled FLO circuits $\mc{C}^{\mathds{F}(2n)}_r$,
\begin{align}
    \mbf{P}^{(j)}_{0} = 2^{-n} \mbf{d}^{(n)} \cdot \mbf{M}^{(n)} \cdot \boldsymbol{\Lambda}_{\mathrm{even}}^{(j)}
    \label{eq:ztypeinvrep}
\end{align}
for $r = ``z"$. 
Similarly,
\begin{align}
    \frac{1}{2}(\mbf{P}^{(j)}_{+} + \mbf{P}^{(j)}_{-}) &= 2^{-n} \mbf{d}^{(n - 1)} \cdot \mbf{M}^{(n - 1)} \cdot \bs{\Lambda}_{\mathrm{even} \setminus \{2n\}}^{(j)}
    \label{eq:plustypeinvrep}\\
    \frac{1}{2}(\mbf{P}^{(j)}_+ - \mbf{P}^{(j)}_-) &= 2^{-n} \mbf{d}^{(n - 1)} \cdot \mbf{M}^{(n - 1)} \cdot \bs{\Lambda}_{\mathrm{odd}}^{(j)}
    \label{eq:minustypeinvrep}
\end{align}
for $r = ``x"$.
\end{replemma}

\begin{proof}
We begin with the statement for $\mc{C}_z^{\mathds{F}(2n)}$.
We have
\begin{align}
    \widetilde{\mc{U}}^{(j)}(\ketbra{\mbf{0}}{\mbf{0}}) &= \prod_{m = 1}^{L(j)} \left(\mc{E}_m^{(j)}\right)^{\ds{F}(2n)} \mc{U}^{(j)}_m (\ketbra{\mbf{0}}{\mbf{0}}) \\
    \widetilde{\mc{U}}^{(j)}(\ketbra{\mbf{0}}{\mbf{0}}) &= 2^{-n} \sum_{\mbf{x} \in \mathds{Z}_2^{\times n}}\prod_{m = 1}^{L(j)} \left(\mc{E}_m^{(j)}\right)^{\ds{M}(n)} \mc{U}^{(j)}_m (Z^{\mbf{x}}) \label{eq:zicrcuitexpand}
\end{align}
where $Z^{\mbf{x}} \coloneqq \prod_{j = 1}^n Z_j^{x_j}$. 
Under the Jordan-Wigner transform, we have
\begin{align}
    Z^{\mbf{x}} = \prod_{j = 1}^n\left(-i \gamma_{2j - 1} \gamma_{2j}\right)^{x_j}
\end{align}
Applying the argument leading to \cref{eq:circeigintro} gives
\begin{align}
    \widetilde{\mc{U}}^{(j)}(\ketbra{\mbf{0}}{\mbf{0}}) &= 2^{-n} \sum_{\mbf{x} \in \mathds{Z}_2^{\times n}}\prod_{m = 1}^{L(j)} \xi_{m, 2|\mbf{x}|}^{(j)} \mc{U}^{(j)}_m (Z^{\mbf{x}}) \\
    \widetilde{\mc{U}}^{(j)}(\rho_0) &= 2^{-n} \sum_{\mbf{x} \in \mathds{Z}_2^{\times n}} \Lambda_{2|\mbf{x}|}^{(j)} \mc{U}^{(j)} (Z^{\mbf{x}}) \mathrm{.}
\end{align}
Circuits in $\mc{C}_z^{\mathds{F}(2n)}$ are chosen to interact as the identity on net.
This gives
\begin{align}
    \widetilde{\mc{U}}^{(j)}(\ketbra{\mbf{0}}{\mbf{0}}) &= 2^{-n} \sum_{\mbf{x} \in \mathds{Z}_2^{\times n}}\Lambda_{2|\mbf{x}|}^{(j)} Z^{\mbf{x}}
\end{align}
and
\begin{align}
    \bra{\mbf{w}} \widetilde{\mc{U}}^{(j)}(\ketbra{\mbf{0}}{\mbf{0}}) \ket{\mbf{w}} &= 2^{-n} \sum_{\mbf{x} \in \mathds{Z}_2^{\times n}}\Lambda_{2|\mbf{x}|}^{(j)} \bra{\mbf{w}} Z^{\mbf{x}} \ket{\mbf{w}} \\
    &= 2^{-n} \sum_{\mbf{x} \in \mathds{Z}_2^{\times n}}\Lambda_{2|\mbf{x}|}^{(j)} (-1)^{\mbf{x} \cdot \mbf{w}} \\
    &= 2^{-n} \sum_{k = 0}^n \Lambda_{2k}^{(j)} \sum_{\mbf{x}|k}(-1)^{\mbf{x} \cdot \mbf{w}} \\
    \bra{\mbf{w}} \widetilde{\mc{U}}^{(j)}(\ketbra{\mbf{0}}{\mbf{0}}) \ket{\mbf{w}}&= 2^{-n} \sum_{k = 0}^n \Lambda_{2k}^{(j)} M_{|\mbf{w}| k} \\
\end{align}
Summing over all $\mbf{w}$ with a fixed Hamming weight $\ell$ gives
\begin{align}
    \sum_{\mbf{w}|\ell} \bra{\mbf{w}} \widetilde{\mc{U}}^{(j)}(\ketbra{\mbf{0}}{\mbf{0}}) \ket{\mbf{w}} = 2^{-n} \binom{n}{\ell} \sum_{k = 0}^n M_{\ell k} \Lambda_{2k}^{(j)} 
\end{align}
which is the desired result.
\begin{align}
    \bp^{(j)}_{0} = 2^{-n} \mbf{d}^{(n)} \cdot \mbf{M}^{(n)} \cdot \boldsymbol{\Lambda}_{\mathrm{even}}^{(j)}.
\end{align}
In the $\mc{C}_x^{\mathds{F}(2n)}$-case, instead of \cref{eq:zicrcuitexpand}, we have 
\begin{align}
     \widetilde{\mc{U}}^{(j)}(\ketbra{+}{+}^{\otimes n}) &= \prod_{m = 1}^{L(j)} \left(\mc{E}_m^{(j)}\right)^{\ds{F}(2n)} \mc{U}^{(j)}_m (\ketbra{+}{+}^{\otimes n}) \label{eq:xicrcuitexpand}
\end{align}
As shown in \cite{helsen2022matchgate}, there exists a FLO unitary $\mc{U}_{+}$ effecting
\begin{align}
\mc{U}_{+}(\ketbra{+}{+}^{\otimes n}) = \left(\frac{I + i \gamma_1}{\sqrt{2}}\right) \ketbra{\mbf{0}}{\mbf{0}} \left(\frac{I - i \gamma_1}{\sqrt{2}}\right)
\end{align}
This gives
\begin{align}
    \widetilde{\mc{U}}^{(j)}(\ketbra{+}{+}^{\otimes n}) &= \prod_{m = 1}^{L(j)} \left(\mc{E}_m^{(j)}\right)^{\ds{F}(2n)} \mc{U}^{(j)}_m \left\{\mc{U}^{\dagger}_{+}\left[\left(\frac{I + i \gamma_1}{\sqrt{2}}\right) \ketbra{\mbf{0}}{\mbf{0}} \left(\frac{I - i \gamma_1}{\sqrt{2}}\right)\right]\right\} \\
    &= 2^{-n} \sum_{\mbf{x} \in \mathds{Z}_2^{\times n}}\prod_{m = 1}^{L(j)} \left(\mc{E}_m^{(j)}\right)^{\ds{F}(2n)} \mc{U}^{(j)}_m \left\{\mc{U}^{\dagger}_{+}\left[\left(\frac{I + i \gamma_1}{\sqrt{2}}\right) Z^{\mbf{x}} \left(\frac{I - i \gamma_1}{\sqrt{2}}\right)\right]\right\} \\
    &= 2^{-n} \sum_{\mbf{x} \in \mathds{Z}_2^{\times n}}\prod_{m = 1}^{L(j)} \left(\mc{E}_m^{(j)}\right)^{\ds{F}(2n)} \mc{U}^{(j)}_m \left\{\mc{U}^{\dagger}_{+}\left[(i \gamma_1)^{x_1} Z^{\mbf{x}} \right]\right\} \\
    &= 2^{-n} \sum_{\mbf{x} \in \mathds{Z}_2^{\times n}}\prod_{m = 1}^{L(j)} \xi_{m, 2|\mbf{x}| - x_1}^{(j)} \mc{U}^{(j)}_m \left\{\mc{U}^{\dagger}_{+}\left[(i \gamma_1)^{x_1} Z^{\mbf{x}} \right]\right\} \\ 
    \widetilde{\mc{U}}^{(j)}(\ketbra{+}{+}^{\otimes n}) &= 2^{-n} \sum_{\mbf{x} \in \mathds{Z}_2^{\times n}} \Lambda_{2|\mbf{x}| - x_1}^{(j)} \prod_{m = 1}^{L(j)} \mc{U}^{(j)}_m \left\{\mc{U}^{\dagger}_{+}\left[(i \gamma_1)^{x_1} Z^{\mbf{x}} \right]\right\} \mathrm{.}
\end{align}
Since these circuits act as $\mc{U}_{+}$ on-net, this gives
\begin{align}
     \widetilde{\mc{U}}^{(j)}(\ketbra{+}{+}^{\otimes n}) &= 2^{-n} \sum_{\mbf{x} \in \mathds{Z}_2^{\times n}} \Lambda_{2|\mbf{x}| - x_1}^{(j)} (i \gamma_1)^{x_1} Z^{\mbf{x}}
\end{align}
Measuring in the $Y$-basis on the first qubit and the $Z$-basis on all other qubits, we have
\begin{align}
    \bra{+_y \mbf{y}}\widetilde{\mc{U}}^{(j)}(\ketbra{+}{+}^{\otimes n}) \ket{+_y \mbf{y}} &= 2^{-n} \sum_{\mbf{x} \in \mathds{Z}_2^{\times (n - 1)}} \left( \Lambda^{(j)}_{2|\mbf{x}|} + \Lambda^{(j)}_{2|\mbf{x}| + 1} \right) (-1)^{\mbf{x} \cdot \mbf{y}} \\
    &= 2^{-n} \sum_{k = 0}^{n - 1} \left(  \Lambda^{(j)}_{2k} +  \Lambda^{(j)}_{2k + 1} \right) \sum_{\mbf{x}|k} 
 (-1)^{\mbf{x} \cdot \mbf{y}} \\
    \bra{+_y \mbf{y}}\widetilde{\mc{U}}^{(j)}(\ketbra{+}{+}^{\otimes n}) \ket{+_y \mbf{y}} &= 2^{-n} \sum_{k = 0}^{n - 1} \left(  \Lambda^{(j)}_{2k} +  \Lambda^{(j)}_{2k + 1} \right) M^{(n - 1)}_{|\mbf{y}| k} \mathrm{.}
\end{align}
Summing again over all probabilities of a fixed Hamming weight gives
\begin{align}
     \sum_{\mbf{y}|\ell, n - 1} \bra{+_y \mbf{y}}\widetilde{\mc{U}}^{(j)}(\ketbra{+}{+}^{\otimes n}) \ket{+_y \mbf{y}} &= 2^{-n} \sum_{k = 0}^{n - 1} \left(  \Lambda^{(j)}_{2k} +  \Lambda^{(j)}_{2k + 1} \right)\sum_{\mbf{y} | \ell, n - 1} M^{(n - 1)}_{|\mbf{y}| k} \\
     \sum_{\mbf{y}|\ell, n - 1} \bra{+_y \mbf{y}}\widetilde{\mc{U}}^{(j)}(\ketbra{+}{+}^{\otimes n}) \ket{+_y \mbf{y}} &= 2^{-n} \binom{n - 1}{\ell} \sum_{k = 0}^{n - 1} \left(  \Lambda^{(j)}_{2k} +  \Lambda^{(j)}_{2k + 1} \right) M^{(n - 1)}_{\ell k}
\end{align}
The corresponding probability for the ``$-$'' outcome on the first bit is given by
\begin{align}
    \sum_{\mbf{y}|\ell, n - 1} \bra{-_y \mbf{y}}\widetilde{\mc{U}}^{(j)}(\ketbra{+}{+}^{\otimes n}) \ket{-_y \mbf{y}} &= 2^{-n} \binom{n - 1}{\ell} \sum_{k = 0}^{n - 1} \left(  \Lambda^{(j)}_{2k} -  \Lambda^{(j)}_{2k + 1} \right) M^{(n - 1)}_{\ell k}
\end{align}
and this gives the desired relations
\begin{align}
    \frac{1}{2}(\bp^{(j)}_{+} + \bp^{(j)}_{-}) &= 2^{-n} \mbf{d}^{(n - 1)} \cdot \mbf{M}^{(n - 1)} \cdot \bs{\Lambda}_{\mathrm{even} \setminus \{2n\}}^{(j)}\\
    \frac{1}{2}(\bp^{(j)}_{+} - \bp^{(j)}_{-}) &= 2^{-n} \mbf{d}^{(n - 1)} \cdot \mbf{M}^{(n - 1)} \cdot \bs{\Lambda}_{\mathrm{odd}}^{(j)}
\end{align}
By estimating these probabilities, we can collectively estimate all of the noise eigenvalues.
\end{proof}

\end{document}